\title{Estimating Jones Polynomials is a Complete Problem for One Clean Qubit}
\author{Peter W. Shor\footnote{Mathematics Department, Massachusetts
    Institute of Technology, Cambridge, Massachusetts
    02139. \emph{shor@math.mit.edu}} \ and Stephen P. Jordan\footnote{Center
    for Theoretical Physics, Massachusetts Institute of Technology,
    Cambridge, Massachusetts 02139. \emph{sjordan@mit.edu}}} 
\date{}
\newcommand{\bra}[1]{\left\langle{#1}\right\vert}
\newcommand{\ket}[1]{\left\vert{#1}\right\rangle}
\newcommand{\qw}[1][-1]{\ar @{-} [0,#1]}
\newcommand{\qwx}[1][-1]{\ar @{-} [#1,0]}
\newcommand{\gate}[1]{*{\xy *+<.6em>{#1};p\save+LU;+RU **\dir{-}\restore\save+RU;+RD **\dir{-}\restore\save+RD;+LD **\dir{-}\restore\POS+LD;+LU **\dir{-}\endxy} \qw}
\newcommand{\meter}{\gate{\xy *!<0em,1.1em>h\cir<1.1em>{ur_dr},!U-<0em,.4em>;p+<.5em,.9em> **h\dir{-} \POS <-.6em,.4em> *{},<.6em,-.4em> *{} \endxy}}
\newcommand{\control}{*!<0em,.025em>-=-{\bullet}}
\newcommand{\ctrl}[1]{\control \qwx[#1] \qw}
\newcommand{\targ}{*!<0em,.019em>=<.79em,.68em>{\xy {<0em,0em>*{} \ar @{ - } +<.4em,0em> \ar @{ - } -<.4em,0em> \ar @{ - } +<0em,.36em> \ar @{ - } -<0em,.36em>},<0em,-.019em>*+<.8em>\frm{o}\endxy} \qw}
\newcommand{\multigate}[2]{*+<1em,.9em>{\hphantom{#2}} \qw \POS[0,0].[#1,0];p !C *{#2},p \save+LU;+RU **\dir{-}\restore\save+RU;+RD **\dir{-}\restore\save+RD;+LD **\dir{-}\restore\save+LD;+LU **\dir{-}\restore}
\newcommand{\ghost}[1]{*+<1em,.9em>{\hphantom{#1}} \qw}
\newcommand{\lstick}[1]{*!R!<.5em,0em>=<0em>{#1}}
\newcommand{\Qcircuit}[1][0em]{\xymatrix @*[o] @*=<#1>}
\newcommand{\captionfonts}{\small}
\long\def\@makecaption#1#2{%
  \vskip\abovecaptionskip
  \sbox\@tempboxa{{\captionfonts #1: #2}}%
  \ifdim \wd\@tempboxa >\hsize
    {\captionfonts #1: #2\par}
  \else
    \hbox to\hsize{\hfil\box\@tempboxa\hfil}%
  \fi
  \vskip\belowcaptionskip}
\renewenvironment{proof}{\noindent \textbf{Proof:}}{$\Box$}
\begin{document}
\bibliographystyle{plain}
\maketitle

\newcommand{\braket}[2]{\langle #1|#2\rangle}
\newcommand{\Bra}[1]{\left<#1\right|}%for big ones
\newcommand{\Ket}[1]{\left|#1\right>}%for big ones
\newcommand{\Braket}[2]{\left< #1 \right| #2 \right>}
\renewcommand{\th}{^\mathrm{th}}
\newcommand{\tr}{\mathrm{Tr}}
\newcommand{\mx}{\begin{array}{l} \includegraphics[width=0.2in]{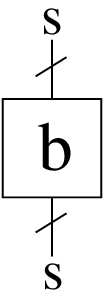} \end{array}}
\newcommand{\ttr}{\widetilde{\tr}}
\newcommand{\id}{\mathds{1}}

\newtheorem{lemma}{Lemma}
\newtheorem{theorem}{Theorem}
\newtheorem*{prop1}{Proposition 1}
\newtheorem{proposition}{Proposition}
\newtheorem{corollary}{Corollary}
\newtheorem{A}{Proposition $A_n$}
\newtheorem{B}{Proposition $B_n$}
\newtheorem{problem}{Problem}

\begin{abstract}
It is known that evaluating a certain approximation to the Jones
polynomial for the plat closure of a braid is a BQP-complete
problem. That is, this problem exactly captures the power of the
quantum circuit model\cite{Freedman, Aharonov1, Aharonov2}. The one
clean qubit model is a model of quantum computation in which all but
one qubit starts in the maximally mixed state. One clean qubit
computers are believed to be strictly weaker than standard quantum
computers, but still capable of solving some classically intractable
problems \cite{Knill}. Here we show that evaluating a certain
approximation to the Jones polynomial at a fifth root of unity for the
trace closure of a braid is a complete problem for the one clean qubit
complexity class. That is, a one clean qubit computer can approximate
these Jones polynomials in time polynomial in both the number of
strands and number of crossings, and the problem of simulating a one
clean qubit computer is reducible to approximating the Jones
polynomial of the trace closure of a braid.
\end{abstract}

\section{One Clean Qubit}
\label{DQC1}

The one clean qubit model of quantum computation originated as an
idealized model of quantum computation on highly mixed initial states,
such as appear in NMR implementations\cite{Knill, Ambainis}. In this
model, one is given an initial quantum state consisting of a
single qubit in the pure state $\ket{0}$, and $n$ qubits in the
maximally mixed state. This is described by the density matrix
\[
\rho = \ket{0}\bra{0} \otimes \frac{I}{2^n}.
\]

One can apply any polynomial-size quantum circuit to $\rho$, and
then measure the first qubit in the computational basis. Thus, if the
quantum circuit implements the unitary transformation $U$, the
probability of measuring $\ket{0}$ will be 
\begin{equation}
\label{experiment}
p_0 = \tr[(\ket{0}\bra{0}\otimes I) U \rho U^\dag] =
2^{-n} \tr[(\ket{0}\bra{0} \otimes I) U (\ket{0}\bra{0} \otimes I)  U^\dag].
\end{equation}

Computational complexity classes are typically described using
decision problems, that is, problems which admit yes/no
answers. This is mathematically convenient, and the implications for
the complexity of non-decision problems are usually straightforward to
obtain (\emph{cf.} \cite{Papadimitriou}). The one clean qubit
complexity class consists of the decision problems which can be solved
in polynomial time by a one clean qubit machine with correctness 
probability of at least $2/3$. The experiment described in 
equation  \ref{experiment} can be repeated polynomially many
times. Thus, if $p_1 \geq 1/2 + \epsilon$ for instances to which the
answer is yes, and $p_1 \leq 1/2 - \epsilon$ otherwise, then by
repeating the experiment $\mathrm{poly}(1/\epsilon)$ times and taking
the majority vote one can achieve $2/3$ probability of
correctness. Thus, as long as $\epsilon$ is at least an inverse
polynomial in the problem size, the problem is contained in the one
clean qubit complexity class. Following \cite{Knill}, we will refer to
this complexity class as DQC1.

A number of equivalent definitions of the one clean qubit complexity
class can be made. For example, changing the pure part of the
initial state and the basis in which the final measurement is
performed does not change the resulting complexity class. Less
trivially, allowing logarithmically many clean qubits results in
the same class, as discussed below. It is essential that on a given
copy of $\rho$, measurements are performed only at the end of the
computation. Otherwise, one could obtain a pure state
by measuring $\rho$ thus making all the qubits ``clean'' and
re-obtaining BQP. Remarkably, it is not necessary to have even one
fully polarized qubit to obtain the class DQC1. As shown in
\cite{Knill}, a single partially polarized qubit suffices.

In the original definition\cite{Knill} of DQC1 it is assumed that a
classical computer generates the quantum circuits to be applied to the
initial state $\rho$. By this definition DQC1 automatically contains
the complexity class P. However, it is also interesting to consider a
slightly weaker one clean qubit model, in which the classical computer
controlling the quantum circuits has only the power of NC1. The
resulting complexity class appears to have the interesting property
that it is incomparable to P. That is, it is not contained in P nor
does P contain it. We suspect that our algorithm and hardness proof
for the Jones polynomial carry over straightforwardly to this
NC1-controlled one clean qubit model. However, we have not pursued
this point.

Any $2^n \times 2^n$ unitary matrix can be decomposed as a linear
combination of $n$-fold tensor products of Pauli matrices. As
discussed in \cite{Knill}, the problem of estimating a coefficient in
the Pauli decomposition of a quantum circuit to polynomial accuracy is
a DQC1-complete problem. Estimating the normalized trace of a quantum
circuit is a special case  of this, and it is also DQC1-complete. This
point is discussed in \cite{Shepherd}. To make our presentation
self-contained, we will sketch here a proof that trace estimation is
DQC1-complete. Technically, we should consider the decision problem of
determining whether the trace is greater than a given
threshold. However, the trace estimation problem is easily reduced to
its decision version by the method of binary search, so we will
henceforth ignore this point.

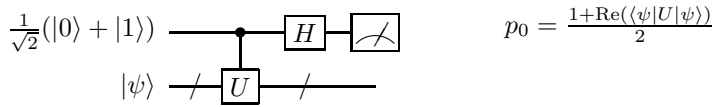
\begin{figure}
\begin{center}
\[
\begin{array}{lll}
\mbox{\Qcircuit @C=1em @R=.7em {
  \lstick{\frac{1}{\sqrt{2}}(\ket{0}+\ket{1})} & \qw & \ctrl{1} & \gate{H} & \meter \\
  \lstick{\ket{\psi}} & {/} \qw  & \gate{U} & {/} \qw & \qw & }} &
  \quad & p_0 = \frac{1+\mathrm{Re}(\bra{\psi}U\ket{\psi})}{2}
\end{array} \]
\caption{\label{Hadamard} This circuit implements the Hadamard
  test. A horizontal line represents a qubit. A horizontal line with a
  slash through it represents a register of multiple qubits. The
  probability $p_0$ of measuring $\ket{0}$ is as shown  above. Thus,
  one can obtain the real part of $\bra{\psi}U\ket{\psi}$ 
  to precision $\epsilon$ by making $O(1/\epsilon^2)$ measurements and
  counting what fraction of the measurement outcomes are
  $\ket{0}$. Similarly, if the control bit is instead initialized to 
  $\frac{1}{\sqrt{2}} (\ket{0} - i \ket{1})$, one can estimate the
  imaginary part of $\bra{\psi}U\ket{\psi}$.}
\end{center}
\end{figure}

First we'll show that trace estimation is contained in DQC1. Suppose we
are given a quantum circuit on $n$ qubits which consists of
polynomially many gates from some finite universal gate set. Given a
state $\ket{\psi}$ of $n$ qubits, there is a standard technique for
estimating $\bra{\psi} U \ket{\psi}$, called the Hadamard
test\cite{Aharonov1}, as shown in figure \ref{Hadamard}. Now suppose
that we use the circuit from figure \ref{Hadamard}, but choose
$\ket{\psi}$ uniformly at random from the $2^n$ computational basis
states. Then the probability of getting outcome $\ket{0}$ for a given
measurement will be
\[
p_0 = \frac{1}{2^n} \sum_{x \in \{0,1\}^n} \frac{1 +
  \mathrm{Re}(\bra{x} U \ket{x})}{2} = \frac{1}{2} +
\frac{\mathrm{Re}(\tr \ U)}{2^{n+1}}.
\]
Choosing $\ket{\psi}$ uniformly at random from the $2^n$ computational
basis states is exactly the same as inputting the density matrix $I/2^n$
to this register. Thus, the only clean qubit is the control
qubit. Trace estimation is therefore achieved in the one clean qubit
model by converting the given circuit for $U$ into a circuit for
controlled-$U$ and adding Hadamard gates on the control bit. One can
convert a circuit for $U$ into a circuit for controlled-$U$ by
replacing each gate $G$ with a circuit for controlled-$G$. The
overhead incurred is thus bounded by a constant factor
\cite{Nielsen}. 

Next we'll show that trace estimation is hard for DQC1. Suppose we are
given a classical description of a quantum circuit implementing some
unitary transformation $U$ on $n$ qubits. As shown in equation
\ref{experiment}, the probability of obtaining outcome $\ket{0}$ from
the one clean qubit computation of this circuit is proportional to the
trace of the non-unitary operator $(\ket{0}\bra{0} \otimes I) U
(\ket{0}\bra{0} \otimes I)  U^\dag$, which acts on $n$
qubits. Estimating this can be achieved by estimating the trace of
\[
U' = \begin{array}{l} \Qcircuit @C=1em @R=.7em {
 &  \qw   & \multigate{1}{U^\dag} & \ctrl{2} & \multigate{1}{U} & \ctrl{3} & \qw    & \qw \\
 & {/}\qw & \ghost{U^\dag}        & \qw      & \ghost{U}        & \qw      & {/}\qw & \qw \\
 &  \qw   & \qw                   & \targ    & \qw              & \qw      & \qw    & \qw \\
 &  \qw   & \qw                   & \qw      & \qw              & \targ    & \qw    & \qw
} \end{array}
\]
which is a unitary operator on $n+2$ qubits. This suffices because
\begin{equation}
\label{tracform}
\tr[(\ket{0}\bra{0} \otimes I) U (\ket{0}\bra{0} \otimes I)
  U^\dag] = \frac{1}{4} \tr[ U' ].
\end{equation}
To see this, we can think in terms of the computational basis:
\[
\tr[U'] = \sum_{x \in \{0,1\}^n} \bra{x} U' \ket{x}.
\]
If the first qubit of $\ket{x}$ is $\ket{1}$, then
the rightmost CNOT in $U'$ will flip the lowermost qubit. The resulting state
will be orthogonal to $\ket{x}$ and the corresponding  matrix element
will not contribute to the trace. Thus this CNOT gate simulates the
initial projector $\ket{0}\bra{0} \otimes I$ in equation
\ref{tracform}. Similarly, the other CNOT in $U'$ simulates the other
projector in equation \ref{tracform}.

The preceding analysis shows that, given a description of a quantum
circuit implementing a unitary transformation $U$ on $n$-qubits, the
problem of  approximating $\frac{1}{2^n} \tr \ U$ to within $\pm
\frac{1}{\mathrm{poly}(n)}$ precision is DQC1-complete.

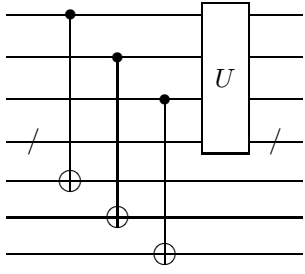
\begin{figure}
\begin{center}
\mbox{ 
\Qcircuit @C=1em @R=.7em {
  & \qw    & \ctrl{4} & \qw      & \qw      & \multigate{3}{U} & \qw    & \qw \\
  & \qw    & \qw      & \ctrl{4} & \qw      & \ghost{U}        & \qw    & \qw \\
  & \qw    & \qw      & \qw      & \ctrl{4} & \ghost{U}        & \qw    & \qw \\
  & {/}\qw & \qw      & \qw      & \qw      & \ghost{U}        & {/}\qw & \qw \\ 
  & \qw    & \targ    & \qw      & \qw      & \qw              & \qw    & \qw \\
  & \qw    & \qw      & \targ    & \qw      & \qw              & \qw    & \qw \\
  & \qw    & \qw      & \qw      & \targ    & \qw              & \qw    & \qw
} }
\caption{\label{ancillas} Here CNOT gates are used to simulate 3 clean
  ancilla qubits.}
\end{center}
\end{figure}

Some unitaries may only be efficiently implementable using ancilla
bits. That is, to implement $U$ on $n$-qubits using a quantum circuit,
it may be most efficient to construct a circuit on $n+m$ qubits which
acts as $U \otimes I$, provided that the $m$ ancilla qubits are all
initialized to $\ket{0}$. These ancilla qubits are used as work bits
in intermediate steps of the computation. To estimate the trace of
$U$, one can construct a circuit $U_a$ on $n+2m$ qubits by adding CNOT
gates controlled by the $m$ ancilla qubits and acting on $m$ extra
qubits, as shown in figure \ref{ancillas}. This simulates the presence
of $m$ clean ancilla qubits, because if any of the ancilla qubits is
in the $\ket{1}$ state then the CNOT gate will flip the corresponding
extra qubit, resulting in an orthogonal state which will not
contribute to the trace.

With one clean qubit, one can estimate the trace of $U_a$ to a
precision of $\frac{2^{n+2m}}{\mathrm{poly}(n,m)}$. By construction, 
$\tr[U_a] = 2^m \tr[U]$. Thus, if $m$ is logarithmic in $n$, then
one can obtain $\tr[U]$ to precision $\frac{2^n}{\mathrm{poly}(n)}$,
just as can be obtained for circuits not requiring ancilla
qubits. This line of reasoning also shows that the $k$-clean qubit
model gives rise to the same complexity class as the one clean qubit
model, for any constant $k$, and even for $k$ growing logarithmically
with $n$.

It seems unlikely that the trace of these exponentially large unitary
matrices can be estimated to this precision on a classical computer in
polynomial time. Thus it seems unlikely that DQC1 is contained in
P. (For more detailed analysis of this point see \cite{Datta}.)
However, it also seems unlikely that DQC1 contains all of BQP. In
other words, one clean qubit computers seem to provide exponential
speedup over classical computation for some problems despite being
strictly weaker than standard quantum computers.

\section{Jones Polynomials}

A knot is defined to be an embedding of the circle in $\mathbb{R}^3$
considered up to continuous transformation (isotopy). More
generally, a link is an embedding of one or more circles in
$\mathbb{R}^3$ up to isotopy. In an oriented knot or link, one of
the two possible traversal directions is chosen for each circle.
Some examples of knots and links are shown in figure \ref{knots}. One
of the fundamental tasks in knot theory is, given two representations
of knots, which may appear superficially different, determine whether
these both represent the same knot. In other words, determine whether
one knot can be deformed into the other without ever cutting the
strand.

\begin{figure}
\begin{center}
\includegraphics[width=0.43\textwidth]{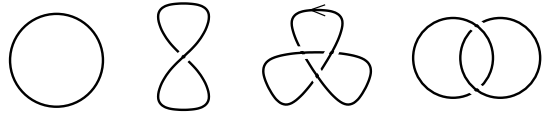}
\caption{\label{knots} Shown from left to right are the unknot,
  another representation of the unknot, an oriented trefoil knot, and
  the Hopf link. Broken lines indicate undercrossings.}
\end{center}
\end{figure} 

Reidemeister showed in 1927 that two knots are the same if and only if
one can be deformed into the other by some sequence constructed from
three elementary moves, known as the Reidemeister moves, shown in
figure \ref{Reidemeister}. This reduces the problem of distinguishing
knots to a combinatorial problem, although one for which no efficient
solution is known. In some cases, the sequence of Reidemeister moves
needed to show equivalence of two knots involves intermediate steps
that increase the number of crossings. Thus, it is very difficult to
show upper bounds on the number of moves necessary. The most
thoroughly studied knot equivalence problem is the problem of deciding
whether a given knot is equivalent to the unknot. Even showing the
decidability of this problem is highly nontrivial. This was achieved
by Haken in 1961\cite{Haken}. In 1998 it was shown by Hass, Lagarias,
and Pippenger that the problem of recognizing the unknot is
contained in NP\cite{Hass}.

\begin{figure}
\begin{center}
\includegraphics[width=0.85\textwidth]{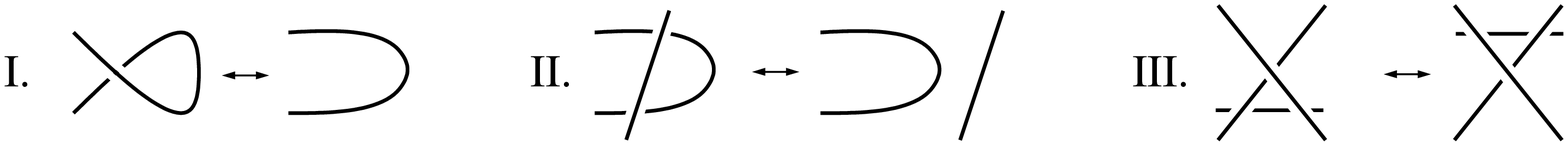}
\caption{\label{Reidemeister} Two knots are the same
  if and only if one can be deformed into the other by some sequence
  of the three Reidemeister moves shown above.}
\end{center}
\end{figure} 

A knot invariant is any function on knots which is invariant under the
Reidemeister moves. Thus, a knot invariant always takes the same
value for different representations of the same knot, such as the two
representations of the unknot shown in figure \ref{knots}. In general,
there can be distinct knots which a knot invariant fails to
distinguish. 

One of the best known knot invariants is the Jones polynomial,
discovered in 1985 by Vaughan Jones\cite{Jones}. To any oriented knot
or link, it associates a Laurent polynomial in the variable
$t^{1/2}$. The Jones polynomial has a degree in $t$ which grows at
most linearly with the number of crossings in the link. The
coefficients are all integers, but they may be exponentially
large. Exact evaluation of Jones polynomials at all but a few special
values of $t$ is \#P-hard\cite{Jaeger}. The Jones polynomial can be
defined recursively by a simple ``skein'' relation. However, for our
purposes it will be more convenient to use a definition in terms of a
representation of the braid group, as discussed below.

\begin{figure}[!htbp]
\begin{center}
\includegraphics[width=0.6\textwidth]{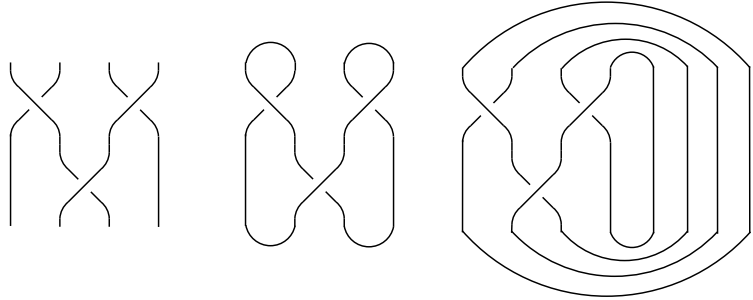}
\caption{\label{trace_plat} Shown from left to right are a braid, its
  plat closure, and its trace closure.}
\end{center}
\end{figure} 

To describe in more detail the computation of Jones polynomials we
must specify how the knot will be represented on the
computer. Although an embedding of a circle in $\mathbb{R}^3$ is a
continuous object, all the topologically relevant information about a
knot can be described in the discrete language of the braid
group. Links can be constructed from braids by joining the free
ends. Two ways of doing this are taking the plat closure and
the trace closure, as shown in figure \ref{trace_plat}. Alexander's
theorem states that any link can be constructed as the trace closure
of some braid. Any link can also be constructed as the plat closure of
some braid. This can be easily proven as a corollary to Alexander's
theorem, as shown in figure \ref{trace_to_plat}.

\begin{figure}[!htbp]
\begin{center}
\includegraphics[width=0.45\textwidth]{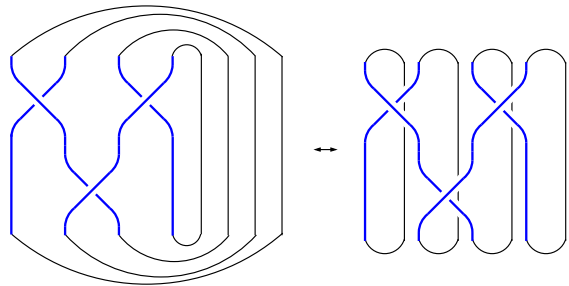}
\caption{\label{trace_to_plat} A trace closure of a braid on $n$
  strands can be converted to a plat closure of a braid on $2n$
  strands by moving the ``return'' strands into the braid.}
\end{center}
\end{figure} 

\begin{figure}[!htbp]
\begin{center}
\includegraphics[width=0.8\textwidth]{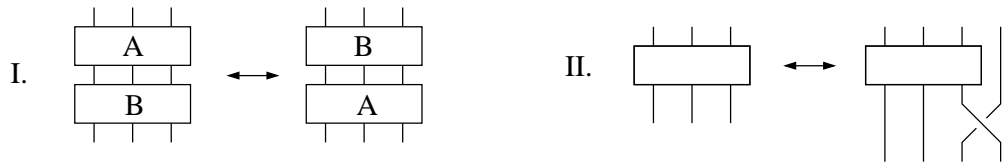}
\caption{\label{Markov} Shown are the two Markov moves. Here the
  boxes represent arbitrary braids. If a function on braids is
  invariant under these two moves, then the corresponding function on
  links induced by the trace closure is a link invariant.}
\end{center}
\end{figure} 

Given that the trace closure provides a correspondence between links
and braids, one may attempt to find functions on braids which yield
link invariants via this correspondence. Markov's theorem shows that a
function on braids will yield a knot invariant provided it is invariant
under the two Markov moves, shown in figure \ref{Markov}. Thus the
Markov moves provide an analogue for braids of the Reidemeister moves
on links. The constraints imposed by invariance under the Reidemeister
moves are enforced in the braid picture jointly by invariance under
Markov moves and by the defining relations of the braid group.

A linear function $f$ satisfying $f(AB) = f(BA)$ is called a
trace. The ordinary trace on matrices is one such function. Taking a
trace of a representation of the braid group yields a function on
braids which is invariant under Markov move I. If the trace and
representation are such that the resulting function is also invariant
under Markov move II, then a link invariant will result. The Jones
polynomial can be obtained in this way.

In \cite{Aharonov1}, Aharonov, \emph{et al.} show that an
additive approximation to the Jones polynomial of the plat or trace
closure of a braid at $t=e^{i 2 \pi/k}$ can be computed on a quantum
computer in time which scales polynomially in the number of strands
and crossings in the braid and in $k$. In \cite{Aharonov2, Wocjan}, it
is shown that for plat closures, this problem is BQP-complete. The
complexity of approximating the Jones polynomial for trace closures
was left open, other than showing that it is contained in BQP.

The results of \cite{Aharonov1, Aharonov2, Wocjan} reformulate and
generalize the previous results of Freedman \emph{et al.}
\cite{Freedman, Freedman2}, which show that certain
approximations of Jones polynomials are BQP-complete. The work of
Freedman \emph{et al.} in turn builds upon Witten's discovery of a
connection between Jones polynomials and topological quantum field
theory \cite{Witten}. Recently, Aharonov \emph{et al.} have
generalized further, obtaining an efficient quantum algorithm for
approximating the Tutte polynomial for any planar graph, at any point
in the complex plane, and also showing BQP-hardness at some points 
\cite{Aharonov3}. As special cases, the Tutte polynomial includes the
Jones polynomial, other knot invariants such as the HOMFLY polynomial,
and partition functions for some physical models such as the Potts
model.

The algorithm of Aharonov \emph{et al.} works by obtaining the Jones
polynomial as a trace of the path model representation of the braid
group. The path model representation is unitary for $t=e^{i 2\pi/k}$
and, as shown in \cite{Aharonov1}, can be efficiently implemented by
quantum circuits. For computing the trace closure of a braid the
necessary trace is similar to the ordinary matrix trace except that
only a subset of the diagonal elements of the unitary implemented by
the quantum circuit are summed, and there is an additional weighting
factor. For the plat closure of a braid the computation instead
reduces to evaluating a particular matrix element of the quantum
circuit. Aharonov \emph{et al.} also use the path model representation
in their proof of BQP-completeness.

Given a braid $b$, we know that the problem of approximating the Jones
polynomial of its plat closure is BQP-hard. By Alexander's
theorem, one can obtain a braid $b'$ whose trace closure is the same
link as the plat closure of $b$. The Jones polynomial depends only on
the link, and not on the braid it was derived from. Thus, one may ask
why this doesn't immediately imply that estimating the Jones
polynomial of the trace closure is a BQP-hard problem. The answer lies
in the degree of approximation. As discussed in section
\ref{conclusion}, the BQP-complete problem for plat closures is to
approximate the Jones polynomial to a certain precision which depends
exponentially on the number of strands in the braid. The number of
strands in $b'$ can be larger than the number of strands in $b$, hence
the degree of approximation obtained after applying Alexander's
theorem may be too poor to solve the original BQP-hard problem.

The fact that computing the Jones polynomial of the trace closure of a
braid can be reduced to estimating a generalized trace of a unitary
operator and the fact that trace estimation is DQC1-complete suggest a
connection between Jones polynomials and the one clean qubit
model. Here we find such a connection by showing that evaluating a
certain approximation to the Jones polynomial of the
trace closure of a braid at a fifth root of unity is
DQC1-complete. The main technical difficulty is obtaining the Jones 
polynomial as a trace over the entire Hilbert space rather than as a
summation of some subset of the diagonal matrix elements. To do this
we will not use the path model representation of the braid group, but
rather the Fibonacci representation, as described in the next section.

\section{Fibonacci Representation}
\label{Fibonacci}

The Fibonacci representation $\rho_F^{(n)}$ of the braid group $B_n$
is described in \cite{Kauffman} in the context of Temperley-Lieb
recoupling theory. Temperley-Lieb recoupling theory describes two
species of idealized ``particles'' denoted by $p$ 
and $*$. We will not delve into the conceptual and mathematical
underpinnings of Temperley-Lieb recoupling theory. For present
purposes, it will be sufficient to regard it as a formal procedure for
obtaining a particular unitary representation of the braid group whose
trace yields the Jones polynomial at $t = e^{i 2 \pi/5}$. Throughout
most of this paper it will be more convenient to express the Jones
polynomial in terms of $A = e^{-i 3 \pi/5}$, with $t$ defined by $t =
A^{-4}$. 

It is worth noting that the Fibonacci representation is a special case
of the path model representation used in \cite{Aharonov1}. The path
model representation applies when $t = e^{i 2 \pi/k}$ for any integer
$k$, whereas the Fibonacci representation is for $k=5$. The
relationship between these two representations is briefly discussed in
appendix \ref{rep_relations}. However, for the sake of making our
discussion self contained, we will derive all of our results directly
within the Fibonacci representation.

\begin{figure}
\begin{center}
\includegraphics[width=0.25\textwidth]{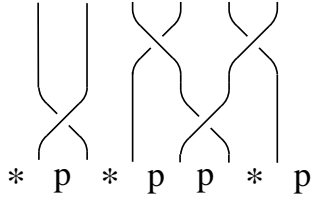}
\caption{\label{pstar} For an $n$-strand braid we can write a length
  $n+1$ string of $p$ and $*$ symbols across the base. The string may
  have no two $*$ symbols in a row, but can be otherwise arbitrary.}
\end{center}
\end{figure} 

\begin{figure}
\begin{center}
\includegraphics[width=0.5\textwidth]{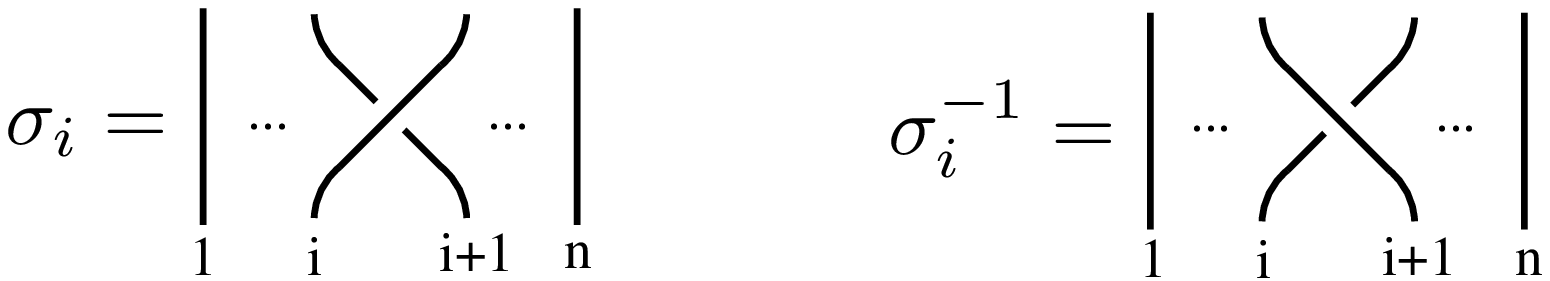}
\caption{\label{braids} $\sigma_i$ denotes the elementary crossing of
  strands $i$ and $i+1$. The braid group on $n$ strands $B_n$ is
  generated by $\sigma_1 \ldots \sigma_{n-1}$, which satisfy the
  relations $\sigma_i \sigma_j = \sigma_j \sigma$ for $|i-j|>1$ and
  $\sigma_{i+1} \sigma_i \sigma_{i+1} = \sigma_i \sigma_{i+1}
  \sigma_i$ for all $i$. The group operation corresponds to
  concatenation of braids.}
\end{center}
\end{figure}

Given an $n$-strand braid $b \in B_n$, we can write a length $n+1$
string of $p$ and $*$ symbols across the base as shown in figure
\ref{pstar}. These strings have the restriction that no two $*$
symbols can be adjacent. The number of such strings is $f_{n+3}$,
where $f_n$ is the $n\th$ Fibonacci number, defined so that $f_1 = 1$,
$f_2 = 1$, $f_3 = 2,\ldots$ Thus the formal linear combinations of
such strings form an $f_{n+3}$-dimensional vector space. For each $n$,
the Fibonacci representation $\rho_F^{(n)}$ is a homomorphism from
$B_n$ to the group of unitary linear transformations on this
space. We will describe the Fibonacci representation in terms of its
action on the elementary crossings which generate the braid group, as
shown in figure \ref{braids}.

The elementary crossings correspond to linear operations which mix
only those strings which differ by the symbol beneath the
crossing. The linear transformations have a local structure, so that
the coefficients for the symbol beneath the crossing to be changed or
unchanged depend only on that symbol and its two neighbors. For
example, using the notation of \cite{Kauffman}, 
\begin{equation}
\label{picto}
\includegraphics[width=0.35\textwidth]{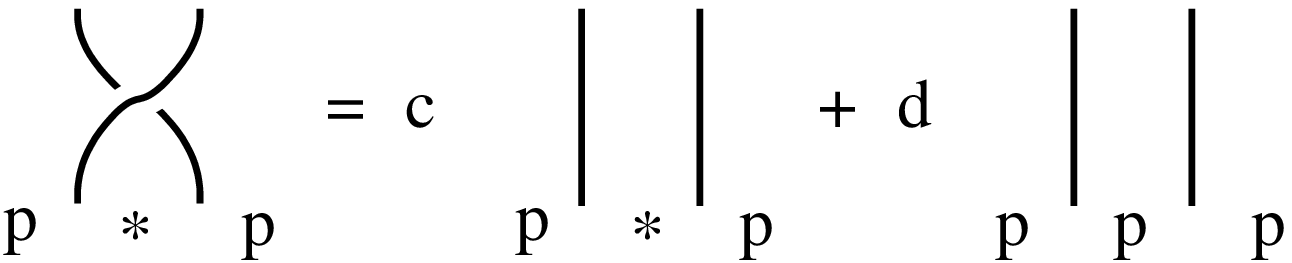}
\end{equation}
which means that the elementary crossing $\sigma_i$ corresponds to a
linear transformation which takes any string whose $i\th$ through 
$(i+2)\th$ symbols are $p*p$ to the coefficient $c$ times the same
string plus the coefficient $d$ times the same string with the $*$ at
the $(i+1)\th$ position replaced by $p$. (As shown in figure 9, the
$i\th$ crossing is over the $(i+1)\th$ symbol.) To compute the linear
transformation that the representation of a given braid applies to a
given string of symbols, one can write the symbols across the base of
the braid, and then apply rules of the form \ref{picto} until all the
crossings are removed, and all that remains are various coefficients
for different strings to be written across the base of a set of
straight strands.

For compactness, we will use $(p \widehat{*} p) = c(p * p) + d(p p p)$
as a shorthand for equation \ref{picto}. In this notation, the
complete set of rules is as follows.

\begin{eqnarray}
\label{rules}
(* \widehat{p} p) & = & a(*pp) \nonumber \\
(* \widehat{p} *) & = & b(*p*) \nonumber \\
(p \widehat{*} p) & = & c(p*p) + d(ppp) \nonumber \\
(p \widehat{p} *) & = & a(pp*) \nonumber \\
(p \widehat{p} p) & = & d(p*p) + e(ppp),
\end{eqnarray}
where
\begin{eqnarray}
\label{constants}
a & = & -A^4 \nonumber \\
b & = & A^8 \nonumber \\
c & = & A^8 \tau^2 - A^4 \tau \nonumber \\
d & = & A^8 \tau^{3/2} + A^4 \tau^{3/2} \nonumber \\
e & = & A^8 \tau - A^4 \tau^2 \nonumber \\
A & = & e^{-i3 \pi/5} \nonumber \\
\tau & = & 2/(1+\sqrt{5}).
\end{eqnarray}

Using these rules we can calculate any matrix from the Fibonacci
representation of the braid group. Notice that this is a reducible
representation. These rules do not allow the rightmost symbol or
leftmost symbol of the string to change. Thus the vector space
decomposes into four invariant subspaces, namely the subspace spanned
by strings which begin and end with $p$, and the $* \ldots *$,
$p \ldots *$, and $*\ldots p$ subspaces. As an example, we can use the
above rules to compute the action of $B_3$ on the $* \ldots p$
subspace. 

\begin{equation}
\label{examp}
\rho_{*p}^{(3)}(\sigma_1) = \left[ \begin{array}{cc}
b & 0 \\
0 & a
\end{array} \right]
\begin{array}{c}
\textrm{$*$p$*$p} \\
\textrm{$*$ppp}
\end{array}
\quad \quad \quad
\rho_{*p}^{(3)}(\sigma_2) = \left[ \begin{array}{cc}
c & d \\
d & e
\end{array} \right]
\begin{array}{c}
\textrm{$*$p$*$p} \\
\textrm{$*$ppp}
\end{array}
\end{equation}

In appendix \ref{proof} we prove that the Jones polynomial evaluated
at $t = e^{i 2 \pi/5}$ can be obtained as a weighted trace of the Fibonacci
representation over the $* \ldots *$ and $*\ldots p$ subspaces.

\section{Computing the Jones Polynomial in DQC1}
\label{containment}

As mentioned previously, the Fibonacci representation acts on the
vector space of formal linear combinations of strings of $p$ and $*$
symbols in which no two $*$ symbols are adjacent. The set of length
$n$ strings of this type, $P_n$, has $f_{n+2}$ elements, where $f_n$ is
the $n\th$ Fibonacci number: $f_1 = 1$, $f_2 = 1$, $f_3 = 2$, and so
on. As shown in appendix \ref{bijective}, one can construct a
bijective correspondence between these strings and the integers from
$0$ to $f_{n+2}-1$ as follows. If we think of $*$ as $1$ and $p$ as
$0$, then with a string $s_n s_{n-1} \ldots s_1$ we associate the
integer
\begin{equation}
\label{correspondence}
z(s) = \sum_{i=1}^n s_i f_{i+1}. 
\end{equation}
This is known as the Zeckendorf representation.

Representing integers as bitstrings by the usual method of place
value, we thus have a correspondence between the elements of $P_n$ and
the bitstrings of length $b = \lceil \log_2(f_{n+2}) \rceil$. This
correspondence will be a key element in computing the Jones polynomial
with a one clean qubit machine. Using a one clean qubit machine, one
can compute the trace of a unitary over the entire Hilbert space of
$2^n$ bitstrings. Using CNOT gates as above, one can also compute with
polynomial overhead the trace over a subspace whose dimension is a
polynomially large fraction of the dimension of the entire Hilbert
space. However, it is probably not possible in general for a one clean
qubit computer to compute the trace over subspaces whose dimension is
an exponentially small fraction of the dimension of the total Hilbert
space. For this reason, directly mapping the strings of $p$ and $*$
symbols to strings of $1$ and $0$ will not work. In contrast, the
correspondence described in equation \ref{correspondence} maps $P_n$
to a subspace whose dimension is at least half the dimension of the
full $2^b$-dimensional Hilbert space.

In outline, the DQC1 algorithm for computing the Jones polynomial works
as follows. Using the results described in section \ref{DQC1}, we will
think of the quantum circuit as acting on $b$ maximally mixed qubits
plus $O(1)$ clean qubits.  Thinking in terms of the computational
basis, we can say that the first $b$ qubits are in a uniform
probabilistic mixture of the $2^b$ classical bitstring states. By
equation \ref{correspondence}, most of these bitstrings correspond to
elements of $P_n$. In the Fibonacci representation, an elementary
crossing on strands $i$ and $i-1$ corresponds to a linear
transformation which can only change the value of the $i\th$ symbol in
the string of $p$'s and $*$'s. The coefficients for changing this
symbol or leaving it fixed depend only on the two neighboring
symbols. Thus, to simulate this linear transformation, we will use a
quantum circuit which extracts the $(i-1)\th$, $i\th$, and $(i+1)\th$
symbols from their bitstring encoding, writes them into an ancilla
register while erasing them from the bitstring encoding, performs the
unitary transformation prescribed by equation \ref{rules} on the
ancillas, and then transfers this symbol back into the bitstring
encoding while erasing it from the ancilla register. Constructing one
such circuit for each crossing, multiplying them together, and
performing DQC1 trace-estimation yields an approximation to the Jones
polynomial.

Performing the linear transformation demanded by equation \ref{rules}
on the ancilla register can be done easily by invoking gate set
universality (\emph{cf.} Solovay-Kitaev theorem \cite{Nielsen}) since
it is just a three-qubit unitary operation. The harder steps are transferring
the symbol values from the bitstring encoding to the ancilla register and
back.

It may be difficult to extract an arbitrary symbol from
the bitstring encoding. However, it is relatively easy to extract the
leftmost ``most significant'' symbol, which determines whether the
Fibonacci number $f_n$ is present in the sum shown in equation
\ref{correspondence}. This is because, for a string $s$ of length $n$,
$z(s) \geq f_{n-1}$ if and only if the leftmost symbol is $*$. Thus,
starting with a clean $\ket{0}$ ancilla qubit, one can transfer the
value of the leftmost symbol into the ancilla as follows. First,
check whether $z(s)$ (as represented by a bitstring using place value)
is $\geq f_{n-1}$. If so flip the ancilla qubit. Then, conditioned on
the value of the ancilla qubit, subtract $f_{n-1}$ from the
bitstring. (The subtraction will be done modulo $2^b$ for
reversibility.) 

Any classical circuit can be made reversible with only constant
overhead. It thus corresponds to a unitary matrix which permutes the
computational basis. This is the standard way of implementing classical
algorithms on a quantum computer\cite{Nielsen}. However, the resulting
reversible circuit may require clean ancilla qubits as work space in
order to be implemented efficiently. For a reversible circuit to be
implementable on a one clean qubit computer, it must be efficiently
implementable using at most logarithmically many clean
ancillas. Fortunately, the basic operations of arithmetic and
comparison for integers can all be done classically by NC1 circuits
\cite{Wegener}. NC1 is the complexity class for problems solvable by
classical circuits of logarithmic depth. As shown in \cite{Ambainis},
any classical NC1 circuit can be converted into a reversible circuit
using only three clean ancillas. This is a consequence of Barrington's
theorem. Thus, the process described above for extracting the leftmost
symbol can be done efficiently in DQC1.

More specifically, Krapchenko's algorithm for adding two $n$-bit
numbers has depth $\lceil \log n \rceil + O(\sqrt{\log n})$
\cite{Wegener}. A lower bound of depth $\log n$ is also known, so this
is essentially optimal \cite{Wegener}. Barrington's construction
\cite{Barrington} yields a sequence of $2^{2 d}$ gates on 3 clean
ancilla qubits \cite{Ambainis} to simulate a circuit of depth
$d$. Thus we obtain an addition circuit which has quadratic size (up
to a subpolynomial factor). Subtraction can be obtained analogously,
and one can determine whether $a \geq b$ can be done by subtracting
$a$ from $b$ and looking at whether the result is negative.

Although the construction based on Barrington's theorem has polynomial
overhead and is thus sufficient for our purposes, it seems worth
noting that it is possible to achieve better efficiency. As shown by
Draper \cite{Draper}, there exist ancilla-free  
quantum circuits for performing addition and subtraction, which
succeed with high probability and have nearly linear
size. Specifically, one can add or subtract a hardcoded number $a$
into an $n$-qubit register $\ket{x}$ modulo $2^n$ by performing
quantum Fourier transform, followed by $O(n^2)$ controlled-rotations,
followed by an inverse quantum Fourier transform. Furthermore, using
approximate quantum Fourier transforms\cite{Coppersmith, Barenco},
\cite{Draper} describes an approximate version of the circuit, which,
for any value of parameter $m$, uses a total of only $O(m n \log n)$
gates\footnote{A linear-size quantum circuit for exact ancilla-free
  addition is known, but it does not generalize easily to the case of
  hardcoded summands \cite{Cuccaro}.} to produce an output having an
inner product with $\ket{x+a \mod 2^n}$ of $1-O(2^{-m})$.

Because they operate modulo $2^n$, Draper's quantum circuits for
addition and subtraction do not immediately yield fast ancilla-free quantum
circuits for comparison, unlike the classical case. Instead, start with
an $n$-bit number $x$ and then introduce a single clean ancilla qubit
initialized to $\ket{0}$. Then subtract an $n$-bit hardcoded number
$a$ from this register modulo $2^{n+1}$. If $a > x$ then the result
will wrap around into the range $[2^n,2^{n+1}-1]$, in which case the
leading bit will be 1. If $a \leq x$ then the result will be in the
range $[0,2^n-1]$. After copying the result of this leading qubit and
uncomputing the subtraction, the comparison is
complete. Alternatively, one could use the linear size quantum 
comparison circuit devised by Takahashi and Kunihiro, which uses $n$
uninitialized ancillas but no clean ancillas\cite{Takahashi}.

\begin{figure}
\[
\begin{array}{cccc|cccccc}
1 & 2 & 3 & 5 & 13 & 8 & 5 & 3 & 2 & 1 \\
* & p & p & * & p  & p & * & p & p & p
\end{array}
\quad \leftrightarrow \quad (6,5)
\]
\caption{\label{pairs} Here we make a correspondence between strings
  of $p$ and $*$ symbols and ordered pairs of integers. The string of
  9 symbols is split into substrings of length 4 and 5, and each one
  is used to compute an integer by adding the $(i+1)\th$ Fibonacci
  number if $*$ appears in the $i\th$ place. Note the two strings are
  read in different directions.}
\end{figure}

Unfortunately, most crossings in a given braid will not be acting on
the leftmost strand. However, we can reduce the problem of extracting
a general symbol to the problem of extracting the leftmost
symbol. Rather than using equation \ref{correspondence} to make a
correspondence between a string from $P_n$ and a single integer, we 
can split the string at some chosen point, and use equation
\ref{correspondence} on each piece to make a correspondence between
elements of $P_n$ and ordered pairs of integers, as shown in figure
\ref{pairs}. To extract the $i\th$ symbol, we thus convert encoding
\ref{correspondence} to the encoding where the string is split between
the $i\th$ and $(i-1)\th$ symbols, so that one only needs to extract
the leftmost symbol of the second string. Like equation
\ref{correspondence}, this is also an efficient encoding, in which the
encoded bitstrings form a large fraction of all possible bitstrings.

To convert encoding \ref{correspondence} to a split encoding with the
split at an arbitrary point, we can move the split rightward by one
symbol at a time. To introduce a split between the leftmost and
second-to-leftmost symbols, one must extract the leftmost symbol as
described above. To move the split one symbol to the right, one must
extract the leftmost symbol from the right string, and if it is $*$
then add the corresponding Fibonacci number to the left string. This
is again a procedure of addition, subtraction, and comparison of
integers. Note that the computation of Fibonacci
numbers in NC1 is not necessary, as these can be hardcoded into the
circuits. Moving the split back to the left works analogously. As
crossings of different pairs of strands are being simulated, the split
is moved to the place that it is needed. At the end it is moved all
the way leftward and eliminated, leaving a superposition of bitstrings
in the original encoding, which have the correct coefficients
determined by the Fibonacci representation of the given braid.

Lastly, we must consider the weighting in the trace, as described by
equation \ref{weightdef}. Instead of weight $W_s$, we will use
$W_s/\phi$ so that the possible weights are 1 and $1/\phi$ both of
which are $\leq 1$. We can impose any weight $\leq 1$ by doing a
controlled rotation on an extra qubit. The CNOT trick for simulating
a clean qubit which was described in section \ref{DQC1} can be viewed
as a special case of this. All strings in which that qubit takes the
value $\ket{1}$ have weight zero, as imposed by a $\pi/2$ rotation on
the extra qubit. Because none of the weights are smaller than
$1/\phi$, the weighting will cause only a constant overhead
in the number of measurements needed to get a given precision.

\section{DQC1-hardness of Jones Polynomials}

We will prove DQC1-hardness of the problem of estimating the Jones
polynomial of the trace closure of a braid by a reduction from the
problem of estimating the trace of a quantum circuit. To do this, we
will specify an encoding, that is, a map $\eta:Q_n \to S_m$
from the set $Q_n$ of strings of $p$ and $*$ symbols which start with
$*$ and have no two $*$ symbols in a row, to $S_m$, the set of
bitstrings of length $m$. For a given quantum circuit, we will
construct a braid whose Fibonacci representation implements the
corresponding unitary transformation on the encoded bits. The Jones
polynomial of the trace closure of this braid, which is the trace of
this representation, will equal the trace of the encoded quantum
circuit.

Unlike in section \ref{containment}, we will not use a one to one
encoding between bit strings and strings of $p$ and $*$ symbols. All
we require is that a sum over all strings of $p$ and $*$ symbols
corresponds to a sum over bitstrings in which each bitstring appears
an equal number of times. Equivalently, all bitstrings $b \in S_m$
must have a preimage $\eta^{-1}(b)$ of the same size. This insures an
unbiased trace in which no bitstrings are overweighted. To achieve
this we can divide the symbol string into blocks of three symbols and
use the encoding

\begin{equation}
\label{codon}
\begin{array}{ccc}
\textrm{ppp} & \to & 0 \\
\textrm{p$*$p} & \to & 1 \\
\end{array}
\end{equation}

The strings other than ppp and p$*$p do not correspond to any bit
value. Since both the encoded 1 and the encoded 0 begin and end with
p, they can be preceded and followed by any allowable string. Thus,
changing an encoded 1 to an encoded zero does not change the number of
allowed strings of $p$ and $*$ consistent with that encoded
bitstring. Thus the condition that $|\eta^{-1}(b)|$ be independent of
$b$ is satisfied.

We would also like to know \emph{a priori} where in the string of p
and $*$ symbols a given bit is encoded. This way, when we need to
simulate a gate acting on a given bit, we would know which strands the
corresponding braid should act on. If we were to simply divide our
string of symbols into blocks of three and write down the
corresponding bit string (skipping every block which is not in one of
the two coding states ppp and p$*$p) then this would not be the
case. Thus, to encode $n$ bits, we will instead divide the string of
symbols into $n$ superblocks, each consisting of $c \log n$ blocks of
three for some constant $c$. To decode a superblock, scan it from left
to right until you reach either a ppp block or a p$*$p block. The
first such block encountered determines whether the superblock encodes
a 1 or a 0, according to equation \ref{codon}. Now imagine we choose a
string randomly from $Q_{3cn \log n}$. By choosing the constant
prefactor $c$ in our superblock size we can ensure that in the entire
string of $3cn \log n$ symbols,  the probability of there being any
noncoding superblock which contains neither a ppp block nor a p$*$p
block is polynomially small. If this is the case, then these noncoding
strings will contribute only a polynomially small additive error to
the estimate of the circuit trace, on par with the other sources of
error.

The gate set consisting of the CNOT, Hadamard, and $\pi/8$ gates is
known to be universal for BQP \cite{Nielsen}. Thus, it suffices to
consider the simulation of 1-qubit and 2-qubit gates. Furthermore, it
is sufficient to imagine the qubits arranged on a line and to allow
2-qubit gates to act only on neighboring qubits. This is because
qubits can always be brought into neighboring positions by applying a
series of SWAP gates to nearest neighbors. By our encoding a
unitary gate applied to qubits $i$ and $i+1$ will correspond to a
unitary transformation on symbols $i3c \log n$ through $(i+2)3c \log
n-1$. The essence of our reduction is to take each quantum gate
and represent it by a corresponding braid on logarithmically many 
symbols whose Fibonacci representation performs that gate on the
encoded qubits. 

Let's first consider the problem of simulating a gate on the first
pair of qubits, which are encoded in the leftmost two superblocks of
the symbol string. We'll subsequently consider the more difficult case
of operating on an arbitrary pair of neighboring encoded qubits. As
mentioned in section \ref{Fibonacci}, the Fibonacci representation 
$\rho_F^{(n)}$ is reducible. Let $\rho_{**}^{(n)}$ denote the
representation of the braid group $B_n$ defined by the action of
$\rho_F^{(n)}$ on the vector space spanned by strings which begin
and end with $*$. As shown in appendix \ref{density},
$\rho_{**}^{(n)}(B_n)$ taken modulo phase is a dense subgroup of
$SU(f_{n-1})$, and $\rho_{*p}^{(n)}(B_n)$ modulo phase is a dense
subgroup of $SU(f_n)$. 

In addition to being dense, the $**$ and $*$p blocks of the
Fibonacci representation can be controlled independently. This is a
consequence of the decoupling lemma, as discussed in appendix
\ref{density}. Thus, given a string of symbols beginning with $*$, and
any desired pair of unitaries on the corresponding $*$p and $**$
vector spaces, a braid can be constructed whose Fibonacci
representation approximates these unitaries to any desired level of
precision. However, the number of crossings necessary may in general
be large. The space spanned by strings of logarithmically many symbols
has only polynomial dimension. Thus, one might guess that the braid
needed to approximate a given pair of unitaries on the $*$p and $**$
vector spaces for logarithmically many symbols will have only
polynomially many crossings. It turns out that this guess is correct,
as we state formally below.
\begin{proposition}
\label{efficiency}
Given any pair of elements $U_{*p} \in SU(f_{k+1})$ and $U_{**} \in
SU(f_k)$, and any real parameter $\epsilon$, one can in polynomial
time find a braid $b \in B_k$ with $\mathrm{poly}(n,\log(1/\epsilon))$
crossings whose Fibonacci representation satisfies
$\| \rho_{*p}(b) - U_{*p} \| \leq \epsilon$ and 
$\| \rho_{**}(b) - U_{**} \| \leq \epsilon$, provided that $k = O(\log
n)$. By symmetry, the same holds when considering $\rho_{p*}$ rather
than $\rho_{*p}$.
\end{proposition}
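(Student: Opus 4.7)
The plan is to obtain the proposition as a corollary of the Solovay--Kitaev theorem applied to the Fibonacci representation of the braid generators, together with the density and decoupling results that will be established in appendix \ref{density}.

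First, I would package those results into a universality statement. The appendix tells us that $\rho_{*p}^{(k)}(B_k)$ is dense in $PSU(f_{k+1})$ and $\rho_{**}^{(k)}(B_k)$ is dense in $PSU(f_k)$, and the decoupling lemma promotes this to density of the joint image $(\rho_{*p}(B_k),\rho_{**}(B_k))$ in the product $PSU(f_{k+1}) \times PSU(f_k)$. Consequently, the $2(k-1)$ braid generators $\sigma_1^{\pm 1},\ldots,\sigma_{k-1}^{\pm 1}$ — a set that is automatically closed under inversion — serve as a universal gate set on this product group modulo global phase. The global phase is irrelevant to the operator-norm distance we want to control.

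Second, I would apply Solovay--Kitaev to the target pair $(U_{*p},U_{**})$ viewed as an element of $SU(f_{k+1}) \times SU(f_k)$. Since $k = O(\log n)$, both $f_k$ and $f_{k+1}$ are polynomial in $n$, so the ambient group has polynomial dimension. The Solovay--Kitaev procedure then produces a word in the generators of length $\mathrm{polylog}(1/\epsilon)$ that approximates any target to precision $\epsilon$, constructively and in time polynomial in its output length. Pulling this word back through $\rho_F$ gives a braid $b \in B_k$ with the required approximation properties. Combining the $\mathrm{polylog}(1/\epsilon)$ scaling with the polynomial overhead coming from the ambient dimension gives the stated $\mathrm{poly}(n,\log(1/\epsilon))$ crossing count.

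The main obstacle is tracking how the constants in Solovay--Kitaev depend on the dimension of the ambient unitary group, since the textbook statement usually fixes the dimension. I need to verify that (a) producing an initial $\epsilon_0$-net of short braid words within a fixed constant distance of the identity costs only $\mathrm{poly}(n)$ words of $\mathrm{poly}(n)$ length, and (b) the multiplicative constant in the recursive commutator construction is bounded by a polynomial in $\dim SU(f_{k+1}) \times SU(f_k)$. Part (a) follows from the density statement combined with a covering/pigeonhole argument over words of length $\mathrm{poly}(n)$ in a compact group of polynomial dimension; part (b) is a routine check that the standard SK estimates, applied in a Lie group of dimension $\mathrm{poly}(n)$, retain polynomial dependence. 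The remaining perturbative inequalities are standard and I would not spell them out.
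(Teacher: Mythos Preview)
Your approach has a genuine gap at exactly the step you flag as ``the main obstacle.'' The Solovay--Kitaev algorithm in $SU(D)$ requires, as its base case, an $\epsilon_0$-net of a neighborhood of the identity; by a volume count any such net has at least $(1/\epsilon_0)^{\Omega(D^2)}$ elements. Here $D = f_{k+1}$, which is polynomial in $n$, so the net has $\exp(\mathrm{poly}(n))$ elements, and both constructing it and searching it at each level of the SK recursion cost $\exp(\mathrm{poly}(n))$ time. Your proposed fix for (a) does not escape this: a pigeonhole or covering argument over words can only produce collisions once the number of words exceeds the covering number, which is already exponential; and in any case the density established in appendix~\ref{density} via the bridge and decoupling lemmas is purely qualitative and carries no bound on the word length needed to hit a given target. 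The paper makes precisely this point right after stating the proposition: efficiency is ``not a generic consequence of density, since it is in principle possible for the images of group generators in a dense representation to lie exponentially close to some subgroup.''

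The paper avoids this by never invoking Solovay--Kitaev in growing dimension. It gives a recursive construction: assuming the proposition on $k$ symbols, one combines those braids with the single new crossing $\sigma_{k-1}$ to build, by an explicit block-manipulation argument, any desired pair $(U_{*p},U_{**})$ on $k+1$ symbols. Two small lemmas do the work: one showing how to realize an arbitrary $SU(2)$ on the last $2\times 2$ block from $O(1)$ pieces, and one showing that any $m\times m$ unitary decomposes into a bounded number of unitaries each supported on $m/2$ basis vectors. Each recursion level therefore uses only a constant number of lower-level subroutines, so the braid length grows by a constant factor per level; with recursion depth $k=O(\log n)$ this yields $\mathrm{poly}(n)$ crossings. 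Solovay--Kitaev is applied only at the fixed-size base case, where all its constants are $O(1)$.
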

Note that proposition \ref{efficiency} is a property of the Fibonacci
representation, not a generic consequence of density, since it is in
principle possible for the images of group generators in a dense
representation to lie exponentially close to some subgroup of the
corresponding unitary group. We prove this proposition in appendix
\ref{app_efficiency}.

With proposition \ref{efficiency} in hand, it is apparent that any
unitary gate on the first two encoded bits can be efficiently
performed. To similarly simulate gates on arbitrary pairs of
neighboring encoded qubits, we will need some way to unitarily bring a
$*$ symbol to a known location within logarithmic distance of the
relevant encoded qubits. This way, we ensure that we are acting in the
$*p$ or $**$ subspaces.

To move $*$ symbols to known locations we'll use an ``inchworm''
structure which brings a pair of $*$ symbols rightward to where they
are needed. Specifically, suppose we have a pair of superblocks which each
have a $*$ in their exact center. The presence of the left $*$ and the
density of $\rho_{*p}$ allow us to use proposition \ref{efficiency} to
unitarily move the right $*$ one superblock to the right by adding
polynomially many crossings to the braid. Then, the presence of the
right $*$ and the density of $\rho_{p*}$ allow us to similarly move
the left $*$ one superblock to the right, thus bringing it into the superblock
adjacent to the one which contains the right $*$. This is illustrated
in figure \ref{inchfig}. To move the inchworm to the left we use the
inverse operation.

\begin{figure}
\begin{center}
\includegraphics[width=0.35\textwidth]{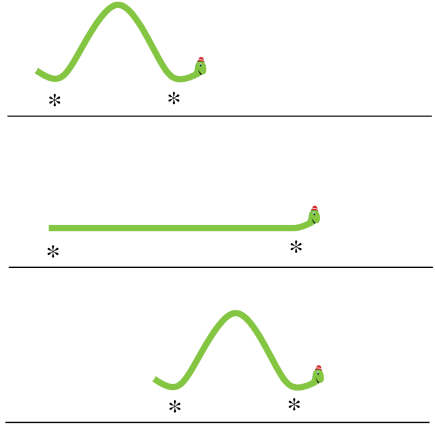}
\caption{\label{inchfig} This sequence of unitary steps is used to
  bring a $*$ symbol where it is needed in the symbol string to ensure
  density of the braid group representation. The presence of the left
  $*$ ensures density to allow the movement of the right $*$ by
  proposition \ref{efficiency}. Similarly, the presence of the right $*$
  allows the left $*$ to be moved.} 
\end{center}
\end{figure}

To simulate a given gate, one first uses the previously described
procedure to make the inchworm crawl to the superblocks just to the
left of the superblocks which encode the qubits on which the gate
acts. Then, by the density of $\rho_{*p}$ and proposition
\ref{efficiency}, the desired gate can be simulated using polynomially
many braid crossings.

To get this process started, the leftmost two superblocks must each contain
a $*$ at their center. This occurs with constant probability. The
strings in which this is not the case can be prevented from
contributing to the trace by a technique analogous to that used in
section \ref{DQC1} to simulate logarithmically many clean
ancillas. Namely, an extra encoded qubit can be conditionally flipped
if the first two superblocks do not both have $*$ symbols at their
center. This can always be done using proposition \ref{efficiency},
since the leftmost symbol in the string is always $*$, and the
$\rho_{*p}$ and $\rho_{**}$ representations are both dense.

It remains to specify the exact unitary operations which move the
inchworm. Suppose we have a current superblock and a target superblock. The
current superblock contains a $*$ in its center, and the target superblock is
the next superblock to the right or left. We wish to move the $*$ to the
center of the target superblock. To do this, we can select the smallest
segment around the center such that in each of these superblocks, the
segment is bordered on its left and right by p symbols. This segment
can then be swapped, as shown in figure \ref{swap}.

\begin{figure}
\begin{center}
\includegraphics[width=0.5\textwidth]{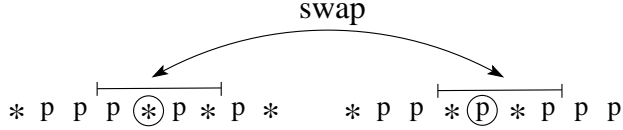}
\caption{\label{swap} This unitary procedure starts with a $*$ in the
  current superblock and brings it to the center of the target superblock.} 
\end{center}
\end{figure}

For some possible strings this procedure will not be well
defined. Specifically there may not be any segment which contains the
center and which is bordered by p symbols in both superblocks. On such
strings we define the operation to act as the identity. For random
strings, the probability of this decreases exponentially with the
superblock size. Thus, by choosing $c$ sufficiently large we can make this
negligible for the entire computation.

As the inchworm moves rightward, it leaves behind a trail. Due to the
swapping, the superblocks are not in their original state after the
inchworm has passed. However, because the operations are unitary, when
the inchworm moves back to the left, the modifications to the superblocks
get undone. Thus the inchworm can shuttle back and forth, moving where
it is needed to simulate each gate, always stopping just to the left
of the superblocks corresponding to the encoded qubits.

The only remaining detail to consider is that the trace appearing
in the Jones polynomial is weighted depending on whether the last
symbol is $p$ or $*$, whereas the DQC1-complete trace estimation
problem is for completely unweighted traces. This problem is easily
solved. Just introduce a single extra superblock at the end of the
string. After bringing the inchworm adjacent to the last superblock,
apply a unitary which performs a conditional rotation on the qubit
encoded by this superblock. The rotation will be by an angle so that
the inner product of the rotated qubit with its original state is
$1/\phi$ where $\phi$ is the golden ratio. This will be done only if
the last symbol is $p$. This exactly cancels out the weighting which
appears in the formula for the Jones polynomial, as described in
appendix \ref{proof}.

Thus, for appropriate $\epsilon$, approximating the Jones polynomial
of the trace closure of a braid to within $\pm \epsilon$ is
DQC1-hard.

\section{Conclusion}
\label{conclusion}

The preceding sections show that the problem of approximating the
Jones polynomial of the trace closure of a braid with $n$ strands and
$m$ crossings to within $\pm \epsilon$ at $t=e^{i 2 \pi/5}$ is a
DQC1-complete problem for appropriate $\epsilon$. The proofs are based
on the problem of evaluating the Markov trace of the Fibonacci
representation of a braid to $\frac{1}{\mathrm{poly}(n,m)}$
precision. By equation \ref{jones}, we see that this corresponds to
evaluating the Jones polynomial with $\pm
\frac{|D^{n-1}|}{\mathrm{poly}(n,m)}$ precision, where
$D=-A^2-A^{-2}=2\cos (6 \pi/5)$. Whereas approximating the Jones
polynomial of the plat closure of a braid was known\cite{Aharonov2} to
be BQP-complete, it was previously only known that the problem of
approximating the Jones polynomial of the trace closure of a braid was
in BQP. Understanding the complexity of approximating the Jones
polynomial of the trace closure of a braid to precision  
$\pm \frac{|D^{n-1}|}{\mathrm{poly}(n,m)}$ was posed as an open
problem in \cite{Aharonov1}. This paper shows that for $A=e^{-i 3
  \pi/5}$, this problem is DQC1-complete. Such a completeness result
improves our understanding of both the difficulty of the Jones
polynomial problem and the power one clean qubit computers by finding
an equivalence between the two.

It is generally believed that DQC1 is not contained in P and does not
contain all of BQP. The DQC1-completeness result shows that if this
belief is true, it implies that approximating the Jones polynomial of
the trace closure of a braid is not so easy that it can be done
classically in polynomial time, but is not so difficult as to be
BQP-hard.

To our knowledge, the problem of approximating the Jones polynomial of
the trace closure of a braid is one of only four known candidates for
classically intractable problems solvable on a one clean qubit
computer. The others are estimating the Pauli decomposition of the
unitary matrix corresponding to a polynomial-size quantum
circuit\footnote{This includes estimating the trace of the unitary as
  a special case.}, \cite{Knill, Shepherd}, estimating quadratically
signed weight enumerators\cite{qwgt}, and estimating average fidelity
decay of quantum maps\cite{decay1, decay2}.

\section{Acknowledgements}
The authors thank David Vogan, Pavel Etingof, Raymond Laflamme, Pawel
Wocjan, Sergei Bravyi, Wim van Dam, Aram Harrow, and Daniel Nagaj for
useful discussions, and an anonymous reviewer for helpful comments. PS
gratefully acknowledges support from the W. M. Keck foundation and
from the NSF under grant number CCF-0431787. SJ gratefully
acknowledges support from ARO/DTO's QuaCGR program and the use of MIT
Center for Theoretical Physics facilities as supported by the US
Department of Energy. Part of this work was completed while SJ was a
short term visitor at Perimeter Institute.

\appendix

\section{Jones Polynomials by Fibonacci Representation}
\label{proof}
For any braid $b \in B_n$ we will define $\ttr(b)$ by:
\begin{equation}
\label{tracedef}
\ttr(b) = \frac{1}{\phi f_n + f_{n-1}} \sum_{s \in Q_{n+1}}
W_s \mx
\end{equation}
We will use $\mid$ to denote a strand and $\nmid$ to denote multiple
strands of a braid (in this case $n$). $Q_{n+1}$ is the set of all
strings of $n+1$ $p$ and $*$ symbols which start with $*$ and contain
no two $*$ symbols in a row. The symbol
\[
\mx
\]
denotes the $s,s$ matrix element of the Fibonacci representation of
braid $b$. The weight $W_s$ is
\begin{equation}
\label{weightdef}
W_s = \left\{ \begin{array}{ll}
\phi & \textrm{if $s$ ends with $p$} \\
1    & \textrm{if $s$ ends with $*$}.
\end{array} \right.
\end{equation}
$\phi$ is the golden ratio $(1+\sqrt{5})/\sqrt{2}$. 

As discussed in \cite{Aharonov1}, the Jones polynomial of the trace
closure of a braid $b$ is given by
\begin{equation}
\label{jones}
V_{b^{\mathrm{tr}}}(A^{-4}) = (-A)^{3w(b^{\mathrm{tr}})} D^{n-1} 
\tr(\rho_A(b^{\mathrm{tr}})).
\end{equation}
$b^{\mathrm{tr}}$ is the link obtained by taking the trace
closure of braid $b$. $w(b^{\mathrm{tr}})$ is denotes the writhe of
the link $b^{\mathrm{tr}}$. For an oriented link, one assigns a
value of $+1$ to each crossing of the form
\includegraphics[width=0.2in]{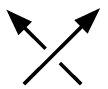}, and the value $-1$
to each crossing of the form
\includegraphics[width=0.2in]{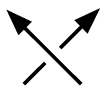}. The writhe of a link is
defined to be the sum of these values over all crossings. $D$ is
defined by $D = -A^2-A^{-2}$. $\rho_A: B_n \to \mathrm{TL}_n(D)$ is a
representation from the braid group to the Temperley-Lieb algebra with
parameter $D$. Specifically, 
\begin{equation}
\label{rhoa}
\rho_A(\sigma_i) = A E_i + A^{-1} \id
\end{equation}
where $E_1 \ldots E_n$ are the generators of $\mathrm{TL}_n(D)$, which
satisfy the following relations.
\begin{eqnarray}
E_i E_j & = & E_j E_i \quad \textrm{ for } |i-j| > 1 \label{rel1} \\
E_i E_{i \pm 1} E_i & = & E_i \label{rel2} \\
E_i^2 & = & D E_i \label{rel3}
\end{eqnarray}
The Markov trace on $\mathrm{TL}_n(D)$ is a linear map $\tr:
\mathrm{TL}_n(D) \to \mathbb{C}$ which satisfies 
\begin{eqnarray}
\tr(\id) & = & 1 \label{mark1} \\
\tr(XY) & = & \tr(YX) \label{mark2} \\
\tr(X E_{n-1}) & = & \frac{1}{D} \tr(X') \label{mark3}
\end{eqnarray}
On the left hand side of equation \ref{mark3}, the trace is on
$\mathrm{TL}_n(D)$, and $X$ is an element of $\mathrm{TL}_n(D)$ not
containing $E_{n-1}$. On the right hand side of equation \ref{mark3},
the trace is on $\mathrm{TL}_{n-1}(D)$, and $X'$ is the element of
$\mathrm{TL}_{n-1}(D)$ which corresponds to $X$ in the obvious way
since $X$ does not contain $E_{n-1}$.

We'll show that the Fibonacci representation satisfies the properties
implied by equations \ref{rhoa}, \ref{rel1}, \ref{rel2}, and
\ref{rel3}. We'll also show that $\ttr$ on the Fibonacci
representation satisfies the properties corresponding to \ref{mark1},
\ref{mark2}, and \ref{mark3}. It was shown in \cite{Aharonov1} that
properties \ref{mark1}, \ref{mark2}, and \ref{mark3}, along with
linearity, uniquely determine the map $\tr$. It will thus follow that
$\ttr(\rho_F^{(n)}(b)) = \tr(\rho_A(b))$, which proves that
the Jones polynomial is obtained from the trace $\ttr$ of
the Fibonacci representation after multiplying by the appropriate
powers of $D$ and $(-A)$ as shown in equation \ref{jones}. Since these
powers are trivial to compute, the problem of approximating the Jones
polynomial at $A = e^{-i 3 \pi/5}$ reduces to the problem of computing
this trace.

\newcommand{\symbox}[1]
{\begin{array}{l} \includegraphics[width=0.5in]{#1.eps} \end{array}}

$\ttr$ is equal to the ordinary matrix trace on the subspace of
strings ending in $*$ plus $\phi$ times the matrix trace on the
subspace of strings ending in p. Thus the fact that the matrix trace
satisfies property \ref{mark2} immediately implies that $\ttr$ does
too. Furthermore, since the dimensions of these subspaces are
$f_{n-1}$ and $f_n$ respectively, we see from equation \ref{tracedef}
that $\ttr(\id) = 1$. To address property \ref{mark3}, we'll
first show that
\begin{equation}
\label{toshow}
\ttr \left( \symbox{braidbox3} \right) =
  \frac{1}{\delta} \ttr \left( \begin{array}{l}
  \includegraphics[width=0.3in]{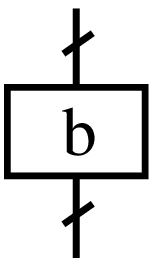} \end{array} \right)
\end{equation}
for some constant $\delta$ which we will calculate. We will then use
equation \ref{rhoa} to relate $\delta$ to $D$.

Using the definition of $\ttr$ we obtain
\begin{eqnarray*}
\ttr \left( \symbox{braidbox3} \right) & = &\frac{1}{f_n \phi + f_{n-1}}
\left[ \phi \sum_{s \in Q_{n-2}} \symbox{psp} + \phi \sum_{s \in
    Q_{n-2}} \symbox{ppp} \right.\\
 & & \left. + \sum_{s \in Q_{n-2}} \symbox{pps} + \sum_{s \in
    Q'_{n-2}} \symbox{sps} + \phi \sum_{s \in Q'_{n-2}} \symbox{spp} \right]
\end{eqnarray*}
where $Q'_{n-2}$ is the set of length $n-2$ strings of $*$ and $p$
symbols which begin with $*$, end with $p$, and have no two $*$
symbols in a row.

Next we expand according to the braiding rules described in equations
\ref{picto} and \ref{rules}.
\begin{eqnarray*}
& = &\frac{1}{f_n \phi + f_{n-1}}
\left[ \sum_{s \in Q_{n-2}} \left( \phi c \symbox{straight_psp} + \phi
  e \symbox{straight_ppp} + a \symbox{straight_pps} \right) \right.\\
 & & \left. + \sum_{s \in Q'_{n-2}} \left( b \symbox{straight_sps} +
  \phi a \symbox{straight_spp} \right) \right]
\end{eqnarray*}
We know that matrix elements in which differing string symbols are
separated by unbraided strands will be zero. To obtain the preceding
expression we have omitted such terms.  Simplifying yields
\newcommand{\smallbox}[1]
{\begin{array}{l} \includegraphics[width=0.35in]{#1.eps} \end{array}}
\[
= \frac{1}{f_n + \phi f_{n-1}} \left[ \sum_{s \in Q_{n-2}} \left( \phi
  c \smallbox{ps} + (\phi e + a) \smallbox{pp} \right) + \sum_{s \in
  Q'_{n-2}} (b+ \phi a) \smallbox{sp} \right].
\]
By the definitions of $A$, $a$, $b$, and $e$, given in equation
\ref{constants}, we see that $\phi e + a = b + \phi a$. Thus the above
expression simplifies to
\newcommand{\minibox}[1]
{\begin{array}{l} \includegraphics[width=0.3in]{#1.eps} \end{array}}
\[
= \frac{1}{f_n \phi + f_{n-1}} \left[ \sum_{s \in Q'_{n-1}} \phi c
  \minibox{s} + \sum_{s \in Q_{n-1}} (\phi e + a) \minibox{p} \right]
\]
Now we just need to show that
\begin{equation}
\label{delta1}
\frac{\phi c}{f_n \phi + f_{n-1}} = 
\frac{1}{\delta}\frac{1}{f_{n-1} \phi + f_{n-2}}
\end{equation}
and
\begin{equation}
\label{delta2}
\frac{\phi e + a}{f_n \phi + f_{n-1}} =
\frac{1}{\delta} \frac{\phi}{f_{n-1} \phi + f_{n-2}}.
\end{equation}
The Fibonacci numbers have the property
\[
\frac{f_n \phi + f_{n-1}}{f_{n-1} \phi + f_{n-2}} = \phi
\]
for all $n$. Thus equations \ref{delta1} and \ref{delta2} are
equivalent to
\begin{equation}
\label{equiv1}
\phi c = \frac{1}{\delta} \phi
\end{equation}
and
\begin{equation}
\label{equiv2}
\phi e + a = \frac{1}{\delta} \phi^2
\end{equation}
respectively. For $A=e^{-i 3 \pi/5}$ these both yield $\delta =
A-1$. Hence
\[
\ttr \left( \symbox{braidbox3} \right) = \frac{1}{\delta}
\frac{1}{f_{n-1} \phi + f_{n-2}} \left[ \sum_{s \in Q'_{n-1}}
  \minibox{s} + \sum_{s \in Q_{n-1}} \phi \minibox{p} \right]
\]
\[
= \frac{1}{\delta} \ttr(b),
\]
thus confirming equation \ref{toshow}. 

Now we calculate $D$ from $\delta$. Solving \ref{rhoa} for $E_i$
yields
\begin{equation}
\label{ei}
E_i = A^{-1} \rho_A(\sigma_i) - A^{-2} \id
\end{equation}
Substituting this into \ref{mark3} yields
\[
\tr(X (A^{-1} \rho_A(\sigma_i)-A^{-2} \id)) = \frac{1}{D} \tr(X)
\]
\[
\Rightarrow A^{-1} \tr(X \rho_A(\sigma_i)) - A^{-2} \tr(X) =
\frac{1}{D} \tr(X).
\]
Comparison to our relation $\tr(X \rho_A(\sigma_i)) = \frac{1}{\delta}
\tr(X)$ yields
\[
A^{-1} \frac{1}{\delta} - A^{-2} = \frac{1}{D}.
\]
Solving for $D$ and substituting in $A=e^{-i 3 \pi/5}$ yields
\[
D = \phi.
\]
This is also equal to $-A^2-A^{-2}$ consistent with the usage
elsewhere.

Thus we have shown that $\ttr$ has all the necessary properties. We
will next show that the image of the representation $\rho_F$ of the
braid group $B_n$ also forms a representation of the Temperley-Lieb
algebra $TL_n(D)$. Specifically, $E_i$ is represented by
\begin{equation}
\label{rep}
E_i \to A^{-1} \rho_F^{(n)}(\sigma_i) - A^{-2} \id.
\end{equation}
To show that this is a representation of $TL_n(D)$ we must show that
the matrices described in equation \ref{rep} satisfy the properties
\ref{rel1}, \ref{rel2}, and \ref{rel3}. By the theorem of
\cite{Aharonov1} which shows that a Markov trace on any representation
of the Temperley-Lieb algebra yields the Jones polynomial, it will
follow that the trace of the Fibonacci representation yields the Jones
polynomial.

Since $\rho_F$ is a representation of the braid group and $\sigma_i
\sigma_j = \sigma_j \sigma_i$ for $|i-j| > 1$, it immediately follows
that the matrices described in equation \ref{rep} satisfy condition
\ref{rel1}. Next, we'll consider condition \ref{rel3}. By inspection
of the Fibonacci representation as given by equation 
\ref{rules}, we see that by appropriately ordering the
basis\footnote{We will have to choose different orderings for
  different $\sigma_i$'s.} we can bring $\rho_A(\sigma_i)$ into block
diagonal form, where each block is one of the following $1 \times 1$
or $2 \times 2$ possibilities.
\[
\left[ a \right] \quad \left[ b \right] \quad 
\left[ \begin{array}{cc}
c & d \\
d & e
\end{array} \right]
\]
Thus, by equation \ref{rep}, it suffices to show that
\[
\left( A^{-1} \left[ \begin{array}{cc}
c & d \\
d & e
\end{array} \right]
- A^{-2} \left[ \begin{array}{cc}
1 & 0 \\
0 & 1
\end{array} \right] \right)^2 = D \left( A^{-1}
\left[ \begin{array}{cc}
c & d \\
d & e
\end{array} \right]
- A^{-2} \left[ \begin{array}{cc}
1 & 0 \\
0 & 1
\end{array} \right] \right),
\]
\[
(A^{-1} a - A^{-2})^2 = D (A^{-1} a - A^{-2} ),
\]
and
\[
(A^{-1} b - A^{-2})^2 = D (A^{-1} b - A^{-2}),
\]
\emph{i.e.} each of the blocks square to $D$ times themselves. These
properties are confirmed by direct calculation.

Now all that remains is to check that the correspondence \ref{rep}
satisfies property \ref{rel2}. Using the rules described in equation
\ref{rules} we have
\[
\rho_F^{(3)}(\sigma_1) = \left[ \begin{array}{cccccccc}
b & 0 &   &   &   &   &   &   \\
0 & a &   &   &   &   &   &   \\
  &   & a &   &   &   &   &   \\
  &   &   & e & 0 & d &   &   \\
  &   &   & 0 & a & 0 &   &   \\
  &   &   & d & 0 & c &   &   \\
  &   &   &   &   &   & e & d \\
  &   &   &   &   &   & d & c
\end{array} \right] 
\begin{array}{c}
\textrm{$*$p$*$p} \\
\textrm{$*$ppp} \\
\textrm{$*$pp$*$} \\
\textrm{pppp} \\
\textrm{pp$*$p} \\
\textrm{p$*$pp} \\
\textrm{ppp$*$} \\
\textrm{p$*$p$*$} \\
\end{array}
\quad \quad
\rho_F^{(3)}(\sigma_2) = \left[ \begin{array}{cccccccc}
c & d &   &   &   &   &   &   \\
d & e &   &   &   &   &   &   \\
  &   & a &   &   &   &   &   \\
  &   &   & e & d & 0 &   &   \\
  &   &   & d & c & 0 &   &   \\
  &   &   & 0 & 0 & a &   &   \\
  &   &   &   &   &   & a & 0 \\
  &   &   &   &   &   & 0 & b
\end{array} \right]
\begin{array}{c}
\textrm{$*$p$*$p} \\
\textrm{$*$ppp} \\
\textrm{$*$pp$*$} \\
\textrm{pppp} \\
\textrm{pp$*$p} \\
\textrm{p$*$pp} \\
\textrm{ppp$*$} \\
\textrm{p$*$p$*$} \\
\end{array}
\]
(Here we have considered all four subspaces unlike in equation
\ref{examp}.) Substituting this into equation \ref{rep} yields
matrices which satisfy condition \ref{rel3}. It follows that equation
\ref{rep} yields a representation of the Temperley-Lieb algebra. This
completes the proof that
\[
V_{b^{\mathrm{tr}}}(A^{-4}) = (-A)^{3w(b^{\mathrm{tr}})} D^{n-1} 
\ttr(\rho_F^{(n)}(b^{\mathrm{tr}}))
\]
for $A=e^{-i 3 \pi/5}$.

\section{Density of the Fibonacci representation}
\label{density}

In this appendix we will show that $\rho_{**}^{(n)}(B_n)$ is a dense
subgroup of $SU(f_{n-1})$ modulo phase, and that
$\rho_{*p}^{(n)}(B_n)$ and $\rho_{p*}^{(n)}(B_n)$ are dense subgroups
of $SU(f_n)$ modulo phase. Similar results regarding the path model
representation of the braid group were proven in \cite{Aharonov2}. Our
proofs will use many of the techniques introduced there.

We'll first show that $\rho_{**}^{(4)}(B_4)$ modulo phase is a dense
subgroup of $SU(2)$. We can then use the bridge lemma from
\cite{Aharonov2} to extend the result to arbitrary $n$.

\begin{proposition} \label{SU2} $\rho_{**}^{(4)}(B_4)$ modulo phase is
  a dense subgroup of $SU(2)$.
\end{proposition}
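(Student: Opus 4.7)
My plan is to reduce the claim to an explicit two dimensional computation and then appeal to the classification of closed subgroups of $SO(3)$. The strings of length $5$ that begin and end with $*$ and contain no two consecutive $*$'s are exactly $*p*p*$ and $*ppp*$, so $\rho_{**}^{(4)}$ is two dimensional, in agreement with $f_{n-1}=f_3=2$. Applying the rules of equation \ref{rules} position by position, both $\sigma_1$ and $\sigma_3$ act on a symbol whose neighbors include a boundary $*$, so each is diagonal and equal to $\mathrm{diag}(b,a)$ in the ordered basis $\{*p*p*,\,*ppp*\}$; while $\sigma_2$ acts via $(p\widehat{*}p)$ and $(p\widehat{p}p)$, yielding $\rho_{**}^{(4)}(\sigma_2)=\left(\begin{smallmatrix}c & d\\ d & e\end{smallmatrix}\right)$. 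In particular $\rho_{**}^{(4)}(\sigma_1)=\rho_{**}^{(4)}(\sigma_3)$, so it suffices to show that the group $G$ generated by $M_1:=\mathrm{diag}(b,a)$ and $M_2:=\left(\begin{smallmatrix}c & d\\ d & e\end{smallmatrix}\right)$ projects densely into $PU(2)\cong SO(3)$.

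I would next extract a global phase. Using $\tau^2=1-\tau$ so that $\tau+\tau^2=1$, a short calculation gives $\det M_1 = ab = -A^{12}$ and $\det M_2 = ce-d^2 = -A^{12}$, both equal to $e^{-i\pi/5}$. Multiplying each generator by the scalar $e^{i\pi/10}$ places it in $SU(2)$ without changing its projective image. The eigenvalue ratio of $M_1$ is $b/a = -A^{4} = e^{i3\pi/5}$, so the normalized $M_1$ is a rotation of angle $3\pi/5$ about the $z$-axis in $SO(3)$, of order exactly $10$. Similarly $\tr M_2 = (A^8-A^4)(\tau+\tau^2) = A^8-A^4$, and after the phase normalization this becomes the real number $-2\sin(\pi/5)=2\cos(7\pi/10)$, so the normalized $M_2$ is also an order $10$ rotation. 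Finally $d = \tau^{3/2}(A^8+A^4) = 2\tau^{3/2}\cos(\pi/5)\,e^{-i3\pi/5}\ne 0$, so $M_2$ is not diagonal and its rotation axis differs from the $z$-axis of $M_1$.

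To conclude I would invoke the classification of closed subgroups of $SO(3)$: each is either finite (cyclic $C_n$, dihedral $D_n$, tetrahedral $A_4$, octahedral $S_4$, or icosahedral $A_5$) or infinite and conjugate to $SO(2)$, $O(2)$, or all of $SO(3)$. The groups $A_4$, $S_4$, $A_5$ have maximal element orders $3$, $4$, $5$ respectively, ruling them out. In each of the remaining proper closed subgroups $C_n$, $D_n$, $SO(2)$, $O(2)$, every element of order greater than $2$ shares a single common rotation axis; but the normalized $M_1$ and $M_2$ are both of order $10$ with distinct axes. Therefore the closure of $G$ in $PU(2)$ cannot lie in any proper closed subgroup and must equal all of $SO(3)$, which is exactly the required density modulo phase.

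The only real obstacle is arithmetic bookkeeping, namely verifying the trace and determinant identities using $\tau^2=1-\tau$, $\tau\phi=1$, and $A=e^{-i3\pi/5}$, and confirming $d\ne 0$ so that the two rotation axes are distinct. Each of these is a short calculation with fifth roots of unity, after which the subgroup classification argument finishes the proof with no additional input.
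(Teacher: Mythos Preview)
Your proof is correct and follows the same overall strategy as the paper: compute the two $2\times 2$ generators explicitly, normalize into $SU(2)$, push down to $SO(3)$ via the double cover, and invoke the classification of closed subgroups of $SO(3)$. Your arithmetic (determinants equal to $-A^{12}$, trace $A^8-A^4$, $d\neq 0$) all checks out.

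The one genuine difference is how $T$, $O$, $I$ are eliminated. The paper computes the angle $\theta_{12}=\cos^{-1}(2-\sqrt 5)$ between the two rotation axes, forms the product $\phi(A)^5\phi(B)^5$, observes this is a rotation by $2\theta_{12}$, and checks numerically that $2\theta_{12}$ is not $2\pi/k$ for $k\le 5$ (or $k\le 30$). You instead observe directly that each generator is already a rotation of order $10$ in $SO(3)$, which exceeds the maximal element orders $3,4,5$ of $T$, $O$, $I$. This is a clean shortcut: it avoids computing $\theta_{12}$ altogether and replaces a numerical check by an exact order computation. Your handling of the axial subgroups is also slightly sharper than the paper's, since you explicitly use that both generators have order $>2$ to rule out the $O(2)$ and $D_n$ cases (where elements off the main axis are all involutions), whereas the paper's ``different axes'' remark glosses over this point.
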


\begin{proof}
Using equation \ref{rules} we have:
\[
\rho_{**}^{(4)}(\sigma_1) = \rho_{**}^{(4)}(\sigma_3) =
\left[ \begin{array}{cc}
b & 0 \\
0 & a
\end{array} \right] \begin{array}{c}
\textrm{$*$p$*$p$*$} \\
\textrm{$*$ppp$*$}
\end{array}
\quad \quad
\rho_{**}^{(4)}(\sigma_2) =
\left[ \begin{array}{cc}
c & d \\
d & e
\end{array} \right] \begin{array}{c}
\textrm{$*$p$*$p$*$} \\
\textrm{$*$ppp$*$}
\end{array}
\]
We do not care about global phase so we will take
\[
\rho_{**}^{(n)}(\sigma_i) \quad \to \quad \frac{1}{ \left( \det
  \rho_{**}^{(n)}(\sigma_i) \right)^{1/f_{n-1}}}
\ \rho_{**}^{(n)}(\sigma_i) 
\]
to project into $SU(f_{n-1})$. Thus we must show the group $\langle
A,B \rangle$ generated by
\begin{equation}
\label{AB}
A = \frac{1}{\sqrt{ab}} \left[ \begin{array}{cc}
b & 0 \\
0 & a
\end{array} \right] \quad \quad
B = \frac{1}{\sqrt{ce-d^2}} \left[ \begin{array}{cc}
c & d \\
d & e
\end{array} \right]
\end{equation}
is a dense subgroup of $SU(2)$. To do this we will use the well known
surjective homomorphism $\phi:SU(2) \to SO(3)$ whose kernel is $\{ \pm
  \id \}$ (\emph{cf.} \cite{Artin}, pg. 276). A general element of
  $SU(2)$ can be written as
\[
\cos\left( \frac{\theta}{2} \right) \id + i \sin \left( \frac{\theta}{2}
\right) \left[ x \sigma_x + y \sigma_y + z \sigma_z \right]
\]
where $\sigma_x$, $\sigma_y$, $\sigma_z$ are the Pauli matrices,
and $x$, $y$, $z$ are real numbers satisfying $x^2+y^2+z^2=1$. $\phi$
maps this element to the rotation by angle $\theta$ about the axis 
\[
\vec{x} = \left[ \begin{array}{c} 
x \\ 
y \\ 
z \end{array} \right].
\]

Using equations \ref{AB} and \ref{constants}, one finds that $\phi(A)$
and $\phi(B)$ are both rotations by $7 \pi /5$. These rotations are
about different axes which are separated by angle
\[
\theta_{12} = \cos^{-1} (2-\sqrt{5})
\simeq
1.8091137886\ldots
\]
To show that $\rho_{**}^{(4)}(B_4)$ modulo phase is a dense subgroup
of $SU(2)$ it suffices to show that $\phi(A)$ and $\phi(B)$ generate
a dense subgroup of $SO(3)$. To do this we take advantage of the fact
that the finite subgroups of $SO(3)$ are completely known.

\begin{theorem}$\mathrm{(\cite{Artin} \ pg. \ 184)}$
Every finite subgroup of $SO(3)$ is one of the following:
\begin{itemize}
\item[] $C_k$: the cyclic group of order $k$
\item[] $D_k$: the dihedral group of order $k$
\item[] $T$: the tetrahedral group (order 12)
\item[] $O$: the octahedral group (order 24)
\item[] $I$: the icosahedral group (order 60)
\end{itemize}
\end{theorem}
The infinite proper subgroups of $SO(3)$ are all isomorphic to $O(2)$
or $SO(2)$. Thus, since $\phi(A)$ and $\phi(B)$ are rotations about
different axes, $\langle \phi(A), \phi(B) \rangle$ can only be
$SO(3)$ or a finite subgroup of $SO(3)$. If we can show that $\langle
\phi(A), \phi(B) \rangle$ is not contained in any of the finite
subgroups of $SO(3)$ then we are done.

Since $\phi(A)$ and $\phi(B)$ are rotations about different axes we
know that $\langle \phi(A), \phi(B) \rangle$ is not $C_k$ or
$D_k$. Next, we note that $R=\phi(A)^5 \phi(B)^5$ is a rotation by $2
\theta_{12}$. By direct calculation, $2 \theta_{12}$ is not an integer
multiple of $2 \pi / k$ for $k = 1,2,3,4,$ or 5. Thus $R$ has order
greater than 5. As mentioned on pg. 262 of \cite{Jones2}, $T$, $O$,
and $I$ do not have any elements of order greater than 5. Thus,
$\langle \phi(A), \phi(B) \rangle$ is not contained in $C$, $O$, or
$I$, which completes the proof. Alternatively, using more arithmetic
and less group theory, we can see that $2 \theta_{12}$ is not any
integer multiple of $2 \pi / k$ for any $k \leq 30$, thus $R$ cannot
be in $T$, $O$, or $I$ since its order does not divide the order of
any of these groups.
\end{proof}

Next we'll consider $\rho_{**}^{(n)}$ for larger $n$. These will be
matrices acting on the strings of length $n+1$. These can be divided
into those which end in pp$*$ and those which end in $*$p$*$. The
space upon which $\rho_{**}^{(n)}$ acts can correspondingly be divided
into two subspaces which are the span of these two sets of
strings. From equation \ref{rules} we can see that
$\rho_{**}^{(n)}(\sigma_1)\ldots\rho_{**}^{(n)}(\sigma_{n-3})$ will
leave these subspaces invariant. Thus if we order our basis to respect
this grouping of strings,
$\rho_{**}^{(n)}(\sigma_1)\ldots\rho_{**}^{(n)}(\sigma_{n-3})$ will
appear block-diagonal with a block corresponding to each of these
subspaces.

The possible prefixes of $*$p$*$ are all strings of length $n-2$ that
start with $*$ and end with p. Now consider the strings acted upon by
$\rho_{**}^{(n-2)}$. These have length $n-1$ and must end in $*$. The 
possible prefixes of this $*$ are all strings of length $n-2$ that
begin with $*$ and end with p. Thus these are in one to one
correspondence with the strings acted upon by $\rho_{**}^{(n)}$ that
end in $*$p$*$. Furthermore, since the rules \ref{rules} depend only
on the three symbols neighboring a given crossing, the block of 
$\rho_{**}^{(n)}(\sigma_1)\ldots\rho_{**}^{(n)}(\sigma_{n-3})$
corresponding to the $*$p$*$ subspace is exactly the same as 
$\rho_{**}^{(n-2)}(\sigma_1)\ldots\rho_{**}^{(n-2)}(\sigma_{n-3})$. By
a similar argument, the block of 
$\rho_{**}^{(n)}(\sigma_1)\ldots\rho_{**}^{(n)}(\sigma_{n-3})$
corresponding to the pp$*$ is exactly the same as 
$\rho_{**}^{(n-1)}(\sigma_1)\ldots\rho_{**}^{(n-1)}(\sigma_{n-3})$.

For any $n> 3$, $\rho_{**}^{(n)}(\sigma_{n-2})$ will not leave these
subspaces invariant. This is because the crossing $\sigma_{n-2}$ spans
the $(n-1)\th$ symbol. Thus if the $(n-2)\th$ and $n\th$ symbols are
$p$, then by equation \ref{rules}, $\rho_{**}^{(n)}$ can flip the
value of the $(n-1)\th$ symbol. The $n\th$ symbol is guaranteed to be
$p$, since the $(n+1)\th$ symbol is the last one and is therefore $*$
by definition. For any $n > 3$, the space acted upon by
$\rho_{**}^{(n)}(\sigma_{n-1})$ will include some strings in which the 
$(n-2)\th$ symbol is $p$. 

As an example, for five strands:
\[
\rho_{**}^{(5)}(\sigma_1) = \left[ \begin{array}{ccc}
b & 0 & 0 \\
0 & a & 0 \\
0 & 0 & a
\end{array} \right]
\begin{array}{c}
\textrm{$*$p$*$pp$*$} \\
\textrm{$*$pppp$*$} \\
\textrm{$*$pp$*$p$*$}
\end{array} \quad \quad
\rho_{**}^{(5)}(\sigma_2) = \left[ \begin{array}{ccc}
c & d & 0 \\
d & e & 0 \\
0 & 0 & a
\end{array} \right]
\begin{array}{c}
\textrm{$*$p$*$pp$*$} \\
\textrm{$*$pppp$*$} \\
\textrm{$*$pp$*$p$*$}
\end{array}
\]
\[
\rho_{**}^{(5)}(\sigma_3) = \left[ \begin{array}{ccc}
a & 0 & 0 \\
0 & e & d \\
0 & d & c
\end{array} \right]
\begin{array}{c}
\textrm{$*$p$*$pp$*$} \\
\textrm{$*$pppp$*$} \\
\textrm{$*$pp$*$p$*$}
\end{array} \quad \quad
\rho_{**}^{(5)}(\sigma_4) = \left[ \begin{array}{ccc}
a & 0 & 0 \\
0 & a & 0 \\
0 & 0 & b
\end{array} \right]
\begin{array}{c}
\textrm{$*$p$*$pp$*$} \\
\textrm{$*$pppp$*$} \\
\textrm{$*$pp$*$p$*$}
\end{array}
\]
We recognize the upper $2 \times 2$ blocks of
$\rho_{**}^{(5)}(\sigma_1)$, and $\rho_{**}^{(5)}(\sigma_2)$ from
equation \ref{examp}. The lower $1 \times 1$ block matches
$\rho_{**}^{(3)}(\sigma_1)$ and $\rho_{**}^{(3)}(\sigma_2)$, which are
both easily calculated to be $[a]$. $\rho_{**}^{(5)}(\sigma_3)$ mixes
these two subspaces.

We can now use the preceding observations about the recursive
structure of $\{ \rho_{**}^{(n)}| n = 4,5,6,7\ldots \}$ to show
inductively that $\rho_{**}^{(n)}(B_n)$ forms a dense subgroup of
$SU(f_{n-1})$ for all $n$. To perform the induction step we use the
bridge lemma and decoupling lemma from \cite{Aharonov2}.

\begin{lemma}[Bridge Lemma]
Let $C=A \oplus B$ where $A$ and $B$ are vector spaces with
$\dim{B} > \dim{A} \geq 1$. Let $W \in SU(C)$ be a linear
transformation which mixes the subspaces $A$ and $B$. Then the group
generated by $SU(A)$, $SU(B)$, and $W$ is dense in $SU(C)$.
\end{lemma}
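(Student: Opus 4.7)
The plan is to prove density by working at the Lie algebra level. Since $SU(C)$ is compact and the closure $\overline{G}$ of the generated subgroup is closed in it, $\overline{G}$ is a Lie subgroup, and density is equivalent to $\mathfrak{g} = \mathfrak{su}(C)$, where $\mathfrak{g}$ denotes the Lie algebra of $\overline{G}$. By construction $\mathfrak{g}$ contains $\mathfrak{h} := \mathfrak{su}(A) \oplus \mathfrak{su}(B)$ (embedded block-diagonally), and it is stable under $\mathrm{Ad}(H)$ for $H := SU(A) \times SU(B) \subseteq \overline{G}$ as well as under $\mathrm{Ad}(W)$.

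The first step is to decompose $\mathfrak{su}(C)$ under the adjoint action of $H$ as
\[
\mathfrak{su}(C) = \mathfrak{h} \oplus \mathfrak{z} \oplus M,
\]
where $\mathfrak{z}$ is the one-dimensional line of block-diagonal traceless scalars and $M$ consists of skew-Hermitian matrices whose two diagonal blocks vanish. Parametrizing $M$ by its upper $a \times b$ block $Y$, the $H$-action is $Y \mapsto U_A Y U_B^\dagger$; a small case check (using that the defining representation of $SU(n)$ is self-dual only for $n \leq 2$) shows that this real representation is irreducible whenever $\dim B > \dim A \geq 1$. The four summands $\mathfrak{su}(A)$, $\mathfrak{su}(B)$, $\mathfrak{z}$, $M$ are pairwise non-isomorphic as $H$-modules, so any $\mathrm{Ad}(H)$-invariant real subspace of $\mathfrak{su}(C)$ is a direct sum of whichever of these four summands it meets nontrivially.

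The crux is to show $\mathfrak{g} \cap M \neq 0$. If it were zero, then $\mathrm{Ad}(W)\,\mathfrak{h} \subseteq \mathfrak{h} \oplus \mathfrak{z}$, meaning $\mathrm{Ad}(W)$ preserves the block-diagonal traceless skew-Hermitian subspace; adding back the central $iI_C$ (which $\mathrm{Ad}(W)$ fixes), it preserves the full block-diagonal Lie algebra $\mathfrak{u}(A) \oplus \mathfrak{u}(B)$, so $W$ normalizes the connected subgroup $U(A) \times U(B) \subseteq U(C)$. A standard computation identifies this normalizer: it equals $U(A) \times U(B)$ when $\dim A \neq \dim B$, and is enlarged by a block-swap element only when $\dim A = \dim B$. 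Under the hypothesis $\dim B > \dim A$ this forces $W$ itself to be block-diagonal, contradicting the mixing assumption. Hence $\mathfrak{g}$ meets $M$, and irreducibility of $M$ under $\mathrm{Ad}(H)$ forces $M \subseteq \mathfrak{g}$.

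Finally, the $\mathfrak{z}$-summand is recovered from brackets inside $M$: for two off-diagonal skew-Hermitian elements $E_1, E_2$ with upper blocks $Y_1, Y_2$, one computes $[E_1,E_2]$ to be block-diagonal with $A$-block trace $2i\,\mathrm{Im}\,\mathrm{tr}(Y_1^\dagger Y_2)$, which is nonzero for generic choices and therefore has nonzero $\mathfrak{z}$-projection. Combining, $\mathfrak{g} \supseteq \mathfrak{h} \oplus \mathfrak{z} \oplus M = \mathfrak{su}(C)$, completing the proof. The main technical obstacle is the normalizer calculation: one must rule out non-block-diagonal normalizers of $U(A) \times U(B)$ in $U(C)$, and this is precisely where the strict dimension inequality in the hypothesis is essential---without it, a block-swap $W$ would conjugate $\mathfrak{h}$ into itself while still being off-diagonal, and the argument would break.
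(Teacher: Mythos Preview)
The paper does not actually prove the Bridge Lemma; it is imported verbatim from \cite{Aharonov2} and used as a black box in Appendix~\ref{density}. There is therefore no in-paper argument to compare against.

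Your Lie-algebraic approach is correct and is essentially the standard way such density statements are established: pass to the Lie algebra $\mathfrak g$ of the closure, decompose $\mathfrak{su}(C)$ into $\mathrm{Ad}(H)$-irreducibles, and use the mixing hypothesis on $W$ to force $\mathfrak g$ to pick up the off-diagonal block $M$. Two places deserve one extra line of justification. First, from $\mathrm{Ad}(W)\,\mathfrak h\subseteq\mathfrak h\oplus\mathfrak z$ you conclude that $\mathrm{Ad}(W)$ preserves the whole block-diagonal subalgebra $\mathfrak h\oplus\mathfrak z$; when $\mathfrak g=\mathfrak h$ this also requires $\mathrm{Ad}(W)\,\mathfrak z\subseteq\mathfrak h\oplus\mathfrak z$, which follows because $\mathfrak z$ is exactly the centralizer of $\mathfrak h$ inside $\mathfrak{su}(C)$ and $\mathrm{Ad}(W)$ already preserves $\mathfrak h$. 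Second, your irreducibility check for $M$ via ``$V_n$ is self-dual only for $n\le 2$'' leaves the single case $\dim A=1,\ \dim B=2$, where $M\cong V_2$ \emph{is} self-conjugate; there one adds that $V_2$ is quaternionic, so its underlying real representation remains irreducible. With these two remarks the argument is complete, and the strict inequality $\dim B>\dim A$ enters exactly where you say it does, in ruling out a block-swap in the normalizer of $U(A)\times U(B)$.
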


\begin{lemma}[Decoupling Lemma]
Let $G$ be an infinite discrete group, and let $A$ and $B$ be two
vector spaces with $\dim(A) \neq \dim(B)$. Let $\rho_a: G \to SU(A)$
and $\rho_b: G \to SU(B)$ be homomorphisms such that $\rho_a(G)$ is
dense in $SU(A)$ and $\rho_b(G)$ is dense in $SU(B)$.  Then for any $U_a
\in SU(A)$ there exist a series of $G$-elements $\alpha_n$ such that
$\lim_{n \to \infty} \rho_a(\alpha_n) = U_a$ and $\lim_{n \to \infty}
\rho_b(\alpha_n) = \id$. Similarly, for any $U_b \in SU(B)$, there exists a
series $\beta_n \in G$ such that $\lim_{n \to \infty} \rho_a(\beta_n) = \id$
and $\lim_{n \to \infty} \rho_a(\beta_n) = U_b$.
\end{lemma}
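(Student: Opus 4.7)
The plan is to form the diagonal homomorphism $\rho := \rho_a \times \rho_b : G \to SU(A) \times SU(B)$ and show that its image is dense in the entire product group; once this is established, the existence of the two sequences follows immediately by approximating the targets $(U_a, \id)$ and $(\id, U_b)$. So the whole task reduces to proving that $\overline{H} := \overline{\rho(G)}$ equals $SU(A) \times SU(B)$.

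First I would observe that $\overline{H}$ is a closed subgroup of the compact Lie group $SU(A) \times SU(B)$, hence itself a (compact) Lie subgroup with some Lie algebra $\mathfrak{h} \subseteq \mathfrak{su}(A) \oplus \mathfrak{su}(B)$. The projection $\pi_A(\overline{H})$ is a closed subgroup of $SU(A)$ (being the continuous image of a compact set) that contains the dense subset $\rho_a(G)$, so $\pi_A(\overline{H}) = SU(A)$; passing to Lie algebras, $\pi_A|_\mathfrak{h}$ is surjective onto $\mathfrak{su}(A)$, and symmetrically $\pi_B|_\mathfrak{h}$ is surjective onto $\mathfrak{su}(B)$. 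Next I would run a Goursat-type argument at the Lie algebra level: define
\[
\mathfrak{k}_A := \{X \in \mathfrak{su}(A) : (X,0) \in \mathfrak{h}\}, \qquad
\mathfrak{k}_B := \{Y \in \mathfrak{su}(B) : (0,Y) \in \mathfrak{h}\}.
\]
A routine bracket computation $[(Y,Y'),(X,0)] = ([Y,X],0)$, combined with surjectivity of $\pi_A|_\mathfrak{h}$, shows $\mathfrak{k}_A$ is an ideal of $\mathfrak{su}(A)$, and symmetrically for $\mathfrak{k}_B$. Since $\mathfrak{su}(n)$ is simple for $n \geq 2$ (the edge case $n=1$ being trivial since $SU(1)$ is itself trivial), each kernel is either $\{0\}$ or the whole factor.

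The main obstacle is to rule out the ``diagonal'' case. If both kernels are full, then $\mathfrak{h} = \mathfrak{su}(A) \oplus \mathfrak{su}(B)$ and we are done. Otherwise, assuming say $\mathfrak{k}_A = 0$, the map $\pi_B|_\mathfrak{h}$ is an isomorphism of Lie algebras $\mathfrak{h} \cong \mathfrak{su}(B)$, and composing its inverse with $\pi_A|_\mathfrak{h}$ yields a surjective Lie algebra homomorphism $\mathfrak{su}(B) \to \mathfrak{su}(A)$; simplicity forces this to be an isomorphism, contradicting $\dim \mathfrak{su}(A) = (\dim A)^2 - 1 \neq (\dim B)^2 - 1 = \dim \mathfrak{su}(B)$. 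This is precisely where the dimension hypothesis is used, and it is the only nontrivial step of the argument.

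Having excluded the diagonal case, $\mathfrak{h} = \mathfrak{su}(A) \oplus \mathfrak{su}(B)$, so by exponentiation $\overline{H}$ contains the identity component of $SU(A) \times SU(B)$; both $SU$ factors are connected, so the product is connected, and hence $\overline{H} = SU(A) \times SU(B)$. Applying this with target $(U_a, \id)$ produces a sequence $\alpha_n \in G$ with $\rho_a(\alpha_n) \to U_a$ and $\rho_b(\alpha_n) \to \id$, and the analogous argument with target $(\id, U_b)$ produces the required $\beta_n$, completing the proof.
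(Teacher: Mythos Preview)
Your argument is correct. The paper itself does not supply a proof of the Decoupling Lemma; it is quoted from \cite{Aharonov2} and used as a black box in the density argument of Appendix~B. So there is no paper proof to compare against.

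Your route---take the closure of the diagonal image in $SU(A)\times SU(B)$, use the closed-subgroup theorem to get a Lie subalgebra $\mathfrak{h}$ that surjects onto each factor, and then run Goursat at the Lie-algebra level---is the standard one, and the use of simplicity of $\mathfrak{su}(n)$ together with $\dim A\neq\dim B$ to kill the diagonal case is exactly right. One small point worth making explicit: when you say ``simplicity forces this to be an isomorphism,'' the surjection $\mathfrak{su}(B)\to\mathfrak{su}(A)$ could alternatively have kernel all of $\mathfrak{su}(B)$, forcing $\mathfrak{su}(A)=0$ and hence $\dim A=1$; you do flag this edge case earlier, but it would be cleaner to dispose of the $\dim=1$ situation once at the outset (if either factor is $SU(1)=\{\id\}$ the lemma is immediate from density in the other factor) rather than leave it implicit inside the main case split.
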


With these in hand we can prove the main proposition of this
appendix.
\begin{proposition}
\label{**density}
For any $n \geq 3$, $\rho_{**}^{(n)}(B_n)$ modulo phase is a dense subgroup
of $SU(f_{n-1})$.
\end{proposition}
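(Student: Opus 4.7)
I prove the claim by strong induction on $n$. The base cases are $n=3$, where $SU(f_2) = SU(1)$ is trivial, and $n=4$, which is Proposition~\ref{SU2}.

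For $n \ge 5$, the plan is to apply the bridge lemma to the decomposition $V = V_1 \oplus V_2$ introduced just before the proposition, where $V_1$ (strings ending in $*p*$, dimension $f_{n-3}$) and $V_2$ (strings ending in $pp*$, dimension $f_{n-2}$) are the two invariant subspaces of $\rho_{**}^{(n)}(\sigma_i)$ for $1 \le i \le n-3$, with the first block matching $\rho_{**}^{(n-2)}(\sigma_i)$ and the second matching $\rho_{**}^{(n-1)}(\sigma_i)$. The paper has already noted that $\rho_{**}^{(n)}(\sigma_{n-2})$ mixes $V_1$ and $V_2$ for $n > 3$, so it will serve as the mixing operator $W$ in the bridge lemma, and the decoupling lemma applies since $\dim V_1 = f_{n-3} \neq f_{n-2} = \dim V_2$ whenever $n \ge 5$.

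Density on the $V_1$ block follows immediately from the induction hypothesis at level $n-2$, because $\sigma_1, \ldots, \sigma_{n-3}$ is the full generating set of $B_{n-2}$ so the $V_1$-block image equals all of $\rho_{**}^{(n-2)}(B_{n-2})$. The harder step is density on $V_2$: here the restricted generators yield only $\rho_{**}^{(n-1)}$ restricted to the subgroup $\langle \sigma_1, \ldots, \sigma_{n-3}\rangle \cong B_{n-2}$ inside $B_{n-1}$, which is proper, so the induction hypothesis at level $n-1$ does not apply directly. I will handle this by applying the same block-decomposition argument recursively inside $V_2 \cong V^{(n-1)}$: split $V_2 = W_1 \oplus W_2$ with $\dim W_1 = f_{n-4}$ and $\dim W_2 = f_{n-3}$, where $\sigma_1, \ldots, \sigma_{n-4}$ act block-diagonally on this finer splitting and $\sigma_{n-3}$, which is still in our restricted set, now plays the role of the internal bridge. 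The nested recursion eventually bottoms out at $V^{(4)}$, where the identity $\rho_{**}^{(4)}(\sigma_1) = \rho_{**}^{(4)}(\sigma_3)$ observed in the proof of Proposition~\ref{SU2} shows that restricting to $\{\sigma_1, \sigma_2\}$ does not shrink the image, or at $V^{(3)}$ where $SU(1)$ is trivial.

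With density established on each of $V_1$ and $V_2$ inside $\rho_{**}^{(n)}(B_n)$, the decoupling lemma yields independent approximation of any pair $(U_1, U_2) \in SU(V_1) \times SU(V_2)$, and a final application of the bridge lemma with $W = \rho_{**}^{(n)}(\sigma_{n-2})$ gives density in $SU(V_1 \oplus V_2) = SU(f_{n-1})$, completing the induction. The main obstacle is precisely the $V_2$ density: the naive appeal to the induction hypothesis at level $n-1$ fails because of the proper generator restriction, so the nested unfolding of the block structure is the key technical ingredient. One must verify at each layer both that the designated $\sigma_k$ genuinely mixes the corresponding sub-blocks (using the same symbol-flipping analysis noted before the proposition, applied at level $k$) and that the dimension hypothesis $f_{k-3} \neq f_{k-2}$ of the decoupling lemma remains satisfied, so that the induction cascades correctly down to the base case.
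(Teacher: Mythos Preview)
Your approach is correct and follows the same route as the paper: induction via the recursive block structure of $\rho_{**}^{(n)}$, combined with the decoupling and bridge lemmas. You have in fact been more careful than the paper on one point: the $pp*$ block only receives $\rho_{**}^{(n-1)}$ evaluated on the proper subgroup $\langle\sigma_1,\ldots,\sigma_{n-3}\rangle\subsetneq B_{n-1}$, so the stated induction hypothesis at level $n-1$ does not literally apply. The paper's proof glosses over this (and has off-by-one index typos at exactly this spot, writing $\sigma_{n-2}$ and $\sigma_{n-1}$ where $\sigma_{n-3}$ and $\sigma_{n-2}$ are meant). Your nested unfolding is a valid fix, but it is more cleanly packaged by simply strengthening the induction hypothesis to: $\langle\rho_{**}^{(n)}(\sigma_1),\ldots,\rho_{**}^{(n)}(\sigma_{n-2})\rangle$ is already dense in $SU(f_{n-1})$ modulo phase, i.e., the top generator $\sigma_{n-1}$ is never needed. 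With this stronger hypothesis the $pp*$ block is handled directly at level $n-1$, and the inductive step establishes the stronger statement at level $n$ using only $\sigma_1,\ldots,\sigma_{n-2}$. That this is the paper's intended reading is confirmed by the Corollary immediately following, which explicitly invokes ``as we saw in the proof \ldots\ $\sigma_n$ is not necessary.'' Your recursion and the strengthened induction are the same argument; the latter avoids re-deriving the full cascade at every step.
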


\begin{proof}
As mentioned previously, the proof will be inductive. The base cases are
$n=3$ and $n=4$. As mentioned previously, $\rho_{**}^{(3)}(\sigma_1) =
\rho_{**}^{(3)}(\sigma_2) = [a]$. Trivially, these generate a dense
subgroup of (indeed, all of) $SU(1) = \{ \id \}$ modulo phase. By
proposition \ref{SU2}, $\rho_{**}^{(4)}(\sigma_1)$, and
$\rho_{**}^{(4)}(\sigma_2)$ generate a dense subgroup of $SU(2)$
modulo phase. Now for induction assume that $\rho_{**}^{(n-1)}(B_{n-1})$ is
a dense subgroup of $SU(f_{n-2})$ and $\rho_{**}^{(n-2)}(B_{n-2})$ is
a dense subgroup of $SU(f_{n-3})$. As noted above, these correspond to
the upper and lower blocks of $\rho_{**}^{(n)}(\sigma_1) \ldots
\rho_{**}^{(n)}(\sigma_{n-2})$. Thus, by the decoupling lemma,
$\rho_{**}^{(n)}(B_n)$ contains an element arbitrarily close to $U
\oplus \id$ for any $U \in SU(f_{n-2})$ and an element arbitrarily
close to $\id \oplus U$ for any $U \in SU(f_{n-3})$. Since, as
observed above, $\rho_{**}^{(n)}(\sigma_{n-1})$ mixes these two
subspaces, the bridge lemma shows that $\rho_{**}^{(n)}(B_n)$ is dense
in $SU(f_{n-1})$.
\end{proof}

From this, the density of $\rho_{*p}^{(n)}$ and $\rho_{p*}^{(n)}$
easily follow.

\begin{corollary}
$\rho_{*p}^{(n)}(B_n)$ and $\rho_{p*}^{(n)}(B_n)$ are dense subgroups
  of $SU(f_n)$ modulo phase.
\end{corollary}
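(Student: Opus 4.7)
The plan is to mirror the inductive strategy of Proposition \ref{**density}. First I would decompose the $f_n$-dimensional representation space of $\rho_{*p}^{(n)}$ according to the $n$th symbol (the one just before the forced final $p$). Strings ending in $*p$ are in bijection with length-$n$ strings that start and end with $*$, giving a subspace of dimension $f_{n-2}$ naturally identified with the space of $\rho_{**}^{(n-1)}$; strings ending in $pp$ are in bijection with length-$n$ strings that start with $*$ and end with $p$, giving a subspace of dimension $f_{n-1}$ identified with the space of $\rho_{*p}^{(n-1)}$. Together these account for all $f_{n-2}+f_{n-1}=f_n$ basis vectors.

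Next I would verify two structural facts about how the generators respect this decomposition. Because the rules \ref{rules} depend only on three consecutive symbols and the last symbol is fixed at $p$, the generators $\sigma_1,\ldots,\sigma_{n-2}$ preserve the two subspaces and act on them exactly as $\rho_{**}^{(n-1)}(\sigma_i)$ and $\rho_{*p}^{(n-1)}(\sigma_i)$, respectively. The remaining generator $\sigma_{n-1}$ acts on the $n$th symbol in the context of its two neighbors; inspecting \ref{rules} at the contexts $p\widehat{*}p$ and $p\widehat{p}p$ shows that, via the nonzero coefficient $d$, it produces an off-diagonal coupling between the $*p$-ending and $pp$-ending subspaces. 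So $\sigma_{n-1}$ is a genuine ``mixing'' element in the sense of the bridge lemma.

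The inductive step then follows exactly as in Proposition \ref{**density}. The base case is $n=3$: the two matrices $\rho_{*p}^{(3)}(\sigma_1)$ and $\rho_{*p}^{(3)}(\sigma_2)$ displayed in equation \ref{examp} coincide, up to an irrelevant overall phase, with the generators $A$ and $B$ analyzed in Proposition \ref{SU2}, so modulo phase they generate a dense subgroup of $SU(2)=SU(f_3)$. For $n\geq 4$, Proposition \ref{**density} gives density of $\rho_{**}^{(n-1)}(B_{n-1})$ in $SU(f_{n-2})$ modulo phase, and the inductive hypothesis gives density of $\rho_{*p}^{(n-1)}(B_{n-1})$ in $SU(f_{n-1})$ modulo phase. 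The decoupling lemma then yields elements of $\rho_{*p}^{(n)}(B_n)$ arbitrarily close to $U\oplus\id$ and $\id\oplus U'$ for arbitrary $U\in SU(f_{n-2})$ and $U'\in SU(f_{n-1})$, and the bridge lemma applied with $\rho_{*p}^{(n)}(\sigma_{n-1})$ as the mixing element yields density in $SU(f_n)$, using $f_{n-1}>f_{n-2}\geq 1$ for $n\geq 4$. Density of $\rho_{p*}^{(n)}(B_n)$ then follows by a symmetry argument: the local rules \ref{rules} are invariant under left-right reflection of three-symbol contexts, so the string-reversal operator conjugates $\rho_{*p}^{(n)}(b)$ to $\rho_{p*}^{(n)}(m(b))$, where $m:\sigma_i\mapsto\sigma_{n-i}$ is the mirror automorphism of $B_n$; since $m$ is a bijection of $B_n$ onto itself and conjugation preserves density, density transfers from $\rho_{*p}^{(n)}$ to $\rho_{p*}^{(n)}$.

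The main obstacle is really bookkeeping rather than new mathematics: correctly matching the recursive block identification with $\rho_{**}^{(n-1)}$ and $\rho_{*p}^{(n-1)}$, and confirming from \ref{rules} that $\sigma_{n-1}$ mixes the two blocks rather than accidentally acting block-diagonally in some basis. Both are direct but easy to miscount, so I would carry them out explicitly for $n=4$ as a sanity check before asserting the general pattern.
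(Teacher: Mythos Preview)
Your argument is correct, but it takes a different route from the paper's. The paper observes directly that appending a final $*$ to each basis string gives a bijection identifying the representation space of $\rho_{*p}^{(n)}$ with that of $\rho_{**}^{(n+1)}$, under which $\rho_{*p}^{(n)}(\sigma_i)=\rho_{**}^{(n+1)}(\sigma_i)$ for $i=1,\ldots,n-1$. It then points out that in the inductive proof of Proposition~\ref{**density} for $\rho_{**}^{(n+1)}$, the final generator $\sigma_n$ is never used: $\sigma_1,\ldots,\sigma_{n-1}$ already furnish the block-diagonal subgroups and the mixing element. Hence density of $\rho_{*p}^{(n)}(B_n)$ follows immediately from the already-completed induction, with no new recursion required. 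Your approach instead runs a fresh induction on $n$, decomposing by the penultimate symbol into blocks identified with $\rho_{**}^{(n-1)}$ and $\rho_{*p}^{(n-1)}$, invoking Proposition~\ref{**density} as a black box for the first block and the inductive hypothesis for the second, and then applying the decoupling and bridge lemmas with $\sigma_{n-1}$ as the mixing element. This is sound (your base case $n=3$ is exactly the paper's identification $\rho_{*p}^{(3)}=\rho_{**}^{(4)}$ specialized to one value of $n$), and it has the minor advantage of using only the \emph{statement} of Proposition~\ref{**density} rather than inspecting which generators its proof actually needed; the paper's argument is shorter because it recycles the entire inductive machinery rather than rebuilding it. Your symmetry argument for $\rho_{p*}^{(n)}$ via the mirror automorphism is the explicit version of what the paper states in one line.
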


\begin{proof}
It is not hard to see that
\[
\begin{array}{rcl}
\rho_{*p}^{(n)}(\sigma_1) & = & \rho_{**}^{(n+1)}(\sigma_1) \\
 & \vdots & \\
\rho_{*p}^{(n)}(\sigma_{n-1}) & = & \rho_{**}^{(n+1)}(\sigma_{n-1})
\end{array}.
\]
As we saw in the proof of proposition \ref{**density},
$\rho_{**}^{(n+1)}(\sigma_n)$ is not necessary to obtain density in
$SU(f_n)$, that is, $\langle \rho_{**}^{(n+1)}(\sigma_1) , \ldots,
\rho_{**}^{(n+1)}(\sigma_{n-1}) \rangle$ is a dense subgroup of
$SU(f_n)$ modulo phase. Thus, the density of $\rho_{*p}^{(n)}$ in
$SU(f_n)$ follows immediately from the proof of proposition
\ref{**density}. By symmetry, $\rho_{p*}^{(n)}(B_n)$ is isomorphic to
$\rho_{*p}^{(n)}(B_n)$, thus this is a dense subgroup of $SU(f_n)$
modulo phase as well.
\end{proof}

\section{Fibonacci and Path Model Representations}
\label{rep_relations}

For any braid group $B_n$, and any root of unity $e^{i 2 \pi/k}$, the
path model representation is a homomorphism from $B_n$ to to a set of
linear operators. The vector space that these linear operators act is
the space of formal linear combinations of $n$ step paths on the rungs
of a ladder of height $k-1$ that start on the bottom rung. As an
example, all the paths for $n=4$, $k=5$ are shown in below.
\[
\includegraphics[width=5in]{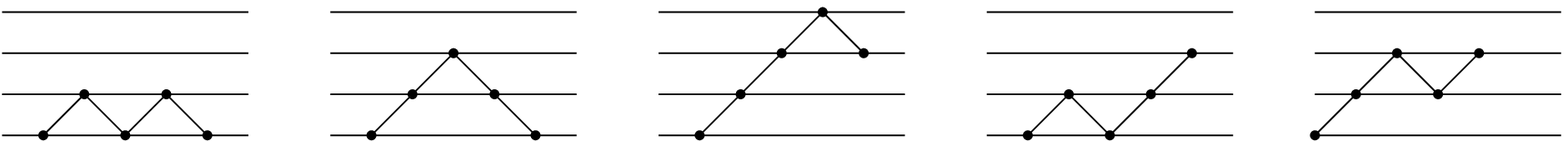}
\]
Thus, the $n=4, k=5$ path model representation is on a five
dimensional vector space. For $k=5$ we can make a bijective
correspondence between the allowed paths of $n$ steps and the set of
strings of $p$ and $*$ symbols of length $n+1$ which start with $*$
and have no to $*$ symbols in a row. To do this, simply label the
rungs from top to bottom as $*$, $p$, $p$, $*$, and directly read off
the symbol string for each path as shown below.
\[
\includegraphics[width=5in]{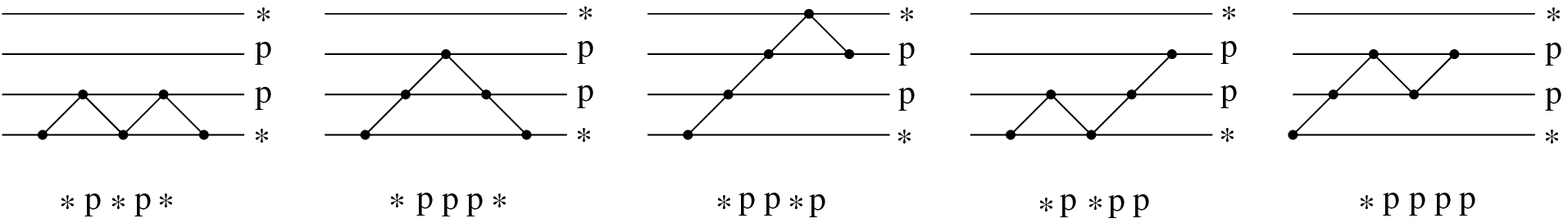}
\]
In \cite{Aharonov1}, it is explained in detail for any given braid how
to calculate the corresponding linear transformation on paths. Using
the correspondence described above, one finds that the path model
representation for $k=5$ is equal to the $-1$ times Fibonacci
representation described in this paper. This sign difference is a
minor detail which arises only because \cite{Aharonov1} chooses a
different fourth root of $t$ for $A$ than we do. This sign difference
is automatically compensated for in the factor of $(-A)^{3
  \cdot \mathrm{writhe}}$, so that both methods yield the correct Jones
polynomial.

\section{Unitaries on Logarithmically Many Strands}
\label{app_efficiency}

In this appendix we'll prove the following proposition.
\begin{prop1}
Given any pair of elements $U_{*p} \in SU(f_{k+1})$ and $U_{**} \in
SU(f_k)$, and any real parameter $\epsilon$, one can in polynomial
time find a braid $b \in B_k$ with $\mathrm{poly}(n,\log(1/\epsilon))$
crossings whose Fibonacci representation satisfies
$\| \rho_{*p}(b) - U_{*p} \| \leq \epsilon$ and 
$\| \rho_{**}(b) - U_{**} \| \leq \epsilon$, provided that $k = O(\log
n)$. By symmetry, the same holds when considering $\rho_{p*}$ rather
than $\rho_{*p}$.
\end{prop1}
To do so, we'll use a recursive construction. Suppose that we already
know how to achieve proposition \ref{efficiency} on $n$ symbols, and
we wish to extend this to $n+1$ symbols. Using the construction for
$n$ symbols we can efficiently obtain a unitary of the form
\begin{equation}
\label{generic}
M_{n-1}(A,B) = \begin{array}{l}
  \includegraphics[width=2in]{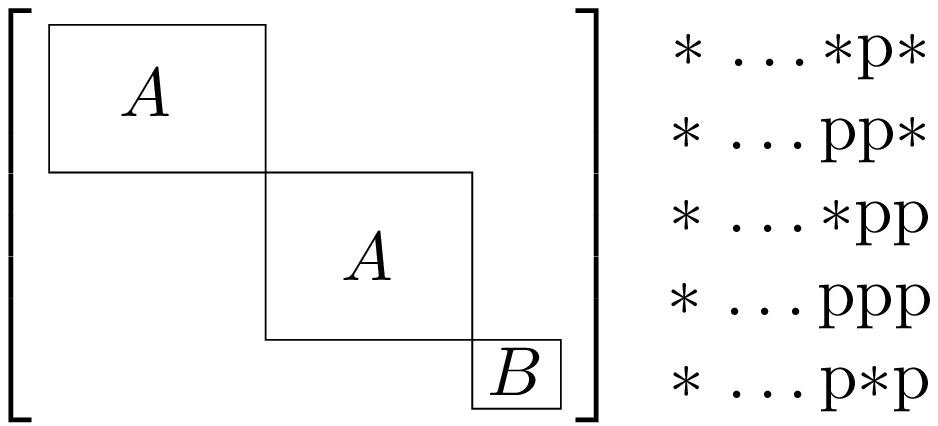} \end{array}
\end{equation}
where $A$ and $B$ are arbitrary unitaries of the appropriate
dimension. The elementary crossing $\sigma_n$ on the last two strands
has the representation\footnote{Here and throughout this appendix when
  we write a scalar $\alpha$ in a block of the matrix we really mean
  $\alpha I$ where $I$ is the identity operator of appropriate dimension.}
\[
M_n = \left[ \begin{array}{ccccc}
b &   &   &   &   \\
  & a &   &   &   \\
  &   & a &   &   \\
  &   &   & e & d \\
  &   &   & d & c
\end{array} \right]
\begin{array}{c}
\textrm{$*$ \ldots $*$p$*$} \\
\textrm{$*$ \ldots pp$*$} \\
\textrm{$*$ \ldots $*$pp} \\
\textrm{$*$ \ldots ppp} \\
\textrm{$*$ \ldots p$*$p}
\end{array}.
\]
As a special case of equation \ref{generic}, we can obtain
\[
M_{\mathrm{diag}}(\alpha) = \left[ \begin{array}{ccccc}
e^{i \alpha/2} &   &   &   &   \\
  & e^{i \alpha/2} &   &   &   \\
  &   & e^{i \alpha/2} &   &   \\
  &   &   & e^{i \alpha/2} &   \\
  &   &   &   & e^{-i \alpha/2}
\end{array} \right]
\begin{array}{c}
\textrm{$*$ \ldots $*$p$*$} \\
\textrm{$*$ \ldots pp$*$} \\
\textrm{$*$ \ldots $*$pp} \\
\textrm{$*$ \ldots ppp} \\
\textrm{$*$ \ldots p$*$p}
\end{array}.
\]
Where $0 \leq \alpha < 2 \pi$. We'll now show the following.
\begin{lemma}
\label{twodim}
For any element 
\[
\left[ \begin{array}{cc}
V_{11} & V_{12} \\
V_{21} & V_{22}
\end{array} \right] \in SU(2),
\]
one can find some product $P$ of $O(1)$ $M_{\mathrm{diag}}$ matrices and
$M_n$ matrices such that for some phases $\phi_1$ and $\phi_2$,   
\[
P = \left[ \begin{array}{ccccc}
\phi_1 &        &        &        &   \\
       & \phi_2 &        &        &   \\
       &        & \phi_2 &        &   \\
       &        &        & V_{11} & V_{12} \\
       &        &        & V_{21} & V_{22}
\end{array} \right]
\begin{array}{c}
\textrm{$*$ \ldots $*$p$*$} \\
\textrm{$*$ \ldots pp$*$} \\
\textrm{$*$ \ldots $*$pp} \\
\textrm{$*$ \ldots ppp} \\
\textrm{$*$ \ldots p$*$p}
\end{array}.
\]
\end{lemma}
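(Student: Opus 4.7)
The plan is to exploit the block-diagonal structure shared by $M_n$ and $M_{\mathrm{diag}}(\alpha)$. Both matrices preserve the direct-sum splitting of the $5$-dimensional space into the top $3$-dimensional subspace (basis vectors ending in $*p*$, $pp*$, $*pp$) and the bottom $2$-dimensional subspace (basis vectors ending in $ppp$, $p*p$), so any product $P$ in these matrices is itself block diagonal with respect to this splitting. In the top block, $M_n$ acts as $\mathrm{diag}(b,a,a)$ while $M_{\mathrm{diag}}(\alpha)$ acts as $e^{i\alpha/2} I_3$; therefore any product automatically remains diagonal with its second and third entries equal, i.e.\ of the prescribed form $\mathrm{diag}(\phi_1,\phi_2,\phi_2)$. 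This half of the lemma is essentially free.

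The bulk of the work is the bottom block, where $M_n$ restricts to the fixed $2\times 2$ unitary $U_0 = \left(\begin{smallmatrix} e & d \\ d & c \end{smallmatrix}\right)$ and $M_{\mathrm{diag}}(\alpha)$ restricts to the $z$-axis rotation $D(\alpha) = \mathrm{diag}(e^{i\alpha/2}, e^{-i\alpha/2})$. Factoring $U_0 = \omega R_0$ with $\omega = \sqrt{\det U_0} \in U(1)$ and $R_0 \in SU(2)$, I note that $R_0$ is a rotation about an axis distinct from $\pm \hat{z}$ (since $d \neq 0$), and a direct computation using $A = e^{-i3\pi/5}$ together with the identity $\tau(1+\tau)=1$ shows that $\omega$ is a root of unity of order dividing $10$. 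The main claim then reduces to a generalized Euler-angle decomposition: for any $V \in SU(2)$, find constants $\alpha_1,\ldots,\alpha_{m+1}$ with bounded $m$ such that $V = D(\alpha_1) R_0 D(\alpha_2) R_0 \cdots R_0 D(\alpha_{m+1})$. I would prove this by first checking via a tangent-space (Jacobian) calculation that the map $(\alpha_1,\alpha_2,\alpha_3) \mapsto D(\alpha_1) R_0 D(\alpha_2) R_0 D(\alpha_3)$ has full rank $3 = \dim SU(2)$ at $\alpha_i = 0$, using that the rotation axis of $R_0$ is not aligned with $\hat{z}$; this gives an open image in $SU(2)$, and combined with density of $\langle R_0, D(\alpha)\rangle$ in $SU(2)$ from Appendix~\ref{density} together with compactness of $SU(2)$, one upgrades to exact surjectivity by words of $O(1)$ length.

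Finally, I would account for the $\omega^k$ phase acquired by the bottom block when $M_n$ appears $k$ times: pad the word with a bounded number of extra copies of $M_n$ or $M_n^{-1}$ (the latter available from the braid generator $\sigma_n^{-1}$) to force $k$ to be a multiple of the order of $\omega$, folding any residual $SU(2)$ contribution from this padding into the adjacent $D(\alpha_i)$ factors. The resulting word $P$ has $O(1)$ total length as required. The principal obstacle I foresee is promoting the density result of Appendix~\ref{density} into an exact bounded-length decomposition: density alone produces only arbitrarily close approximations, whereas the lemma demands exact equality in a fixed number of factors. The Jacobian/open-mapping argument outlined above is the natural route, but verifying full rank for the specific fixed angle and axis of $R_0$ is the step most in need of careful calculation.
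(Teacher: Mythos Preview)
Your approach is sound and takes a genuinely different route from the paper's. The paper never invokes the inverse function theorem or compactness; it works directly in $SO(3)$ via the double cover and builds an explicit Euler-angle factorisation. The key observation there is that $\phi(B_n)$ is a rotation by $7\pi/5$, so $\phi(B_n^5)$ is a $\pi$-rotation; conjugating $\phi(B_{\mathrm{diag}}(\alpha))$ by this half-turn, and then once more by the result, produces a one-parameter family of rotations about an axis that can be steered, via one further conjugation parameter, to sit exactly at angle $\pi/2$ from $\hat z$. Two perpendicular axes with free angles then give every element of $SO(3)$ by the usual $ZXZ$ decomposition. This is fully constructive and uses a fixed explicit number of $B_n$'s, whereas your compactness argument yields only a non-effective $O(1)$ bound. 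Both suffice for the lemma; the paper's version is more concrete, yours would transfer more readily to other choices of the fixed off-axis rotation. Incidentally, the full-rank check you flag as delicate does go through: the relevant $3\times 3$ determinant works out to $2\cos\theta_{12}\,\sin^2\theta_{12}\,\sin\beta\,(1-\cos\beta)$ with $\beta=7\pi/5$ and $\cos\theta_{12}=2-\sqrt5$, and every factor is nonzero.

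The actual soft spot is your final paragraph on padding. Inserting extra copies of $M_n^{\pm 1}$ shifts the net $U_0$-exponent, but each such insertion also multiplies the $SU(2)$ block by a power of $R_0$, a rotation about an axis off $\hat z$; that cannot be ``folded into an adjacent $D(\alpha_i)$'' as you suggest. Padding by $R_0^{10}$ is central and hence harmless on the $SU(2)$ side, but then the exponent moves only by multiples of $10$ and the residue is unchanged. This is not a serious defect, because the paper itself proves and uses the lemma only modulo global phase: its $SO(3)$ argument already loses the sign in the double cover, and the very first application begins ``up to global phase, we can obtain $M_{\mathrm{swap}}$.'' So you may simply absorb $\omega^k$ into the global phase and drop the padding entirely. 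If you want an exact statement, a clean fix is to replace the single fixed generator $U_0$ by the conjugated family $U_0\,D(\beta)\,U_0^{-1}$, which already lies in $SU(2)$ and still has rotation axis off $\hat z$; your Jacobian-plus-compactness argument then runs entirely inside $SU(2)$ with no determinant bookkeeping.
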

\begin{proof}
Let $B_{\mathrm{diag}}(\alpha)$ and $B_n$ be the following $2 \times
2$ matrices 
\[
B_{\mathrm{diag}}(\alpha) = \left[ \begin{array}{cc}
e^{i \alpha/2} & 0 \\
0 & e^{- i \alpha/2}
\end{array} \right]
\quad \mathrm{and} \quad
B_n = \left[ \begin{array}{cc}
e & d \\
d & c
\end{array}
\right]
\]
We wish to show that we can approximate an arbitrary element of $SU(2)$ as
a product of $O(1)$ $B_{\mathrm{diag}}$ and $B_n$ matrices. To do this, we will
use the well known homomorphism $\phi: SU(2) \to SO(3)$ whose kernel
is $\{ \pm \id \}$ (see appendix \ref{density}). To obtain an arbitrary
element $V$ of $SU(2)$ modulo phase it suffices to show that
the we can use $\phi(B_n)$ and $\phi(B_{\mathrm{diag}}(\alpha))$ to
obtain an arbitrary $SO(3)$ rotation. In appendix \ref{density} we
showed that
\[
\left[ \begin{array}{cc}
a & 0 \\
0 & b
\end{array} \right]
\quad \mathrm{and} \quad
\left[ \begin{array}{cc}
e & d \\
d & c
\end{array}
\right]
\]
correspond to two rotations of $7 \pi/5$ about axes which are
separated by an angle of $\theta_{12} \simeq 1.8091137886\ldots$
By the definition of $\phi$, $\phi(B_{\mathrm{diag}}(\alpha))$ is a rotation
by angle $\alpha$ about the same axis that 
$\phi \left( \left[ \begin{array}{cc} a & 0 \\ 0 & b \end{array}
  \right] \right)$ rotates about. $\phi(B_n^5)$ is a $\pi$
rotation. Hence, $R(\alpha) \equiv \phi(B_n^5 
B_{\mathrm{diag}}(\alpha) B_n^5)$ is a rotation by angle $\alpha$
about an axis which is separated by angle\footnote{We subtract $\pi$
  because the angle between axes of rotation is only defined modulo
  $\pi$. Our convention is that these angles are in $[0,\pi)$.} $2
\theta_{12}-\pi$ from the 
axis that $\phi(B_{\mathrm{diag}}(\alpha))$ rotates about. $Q \equiv
R(\pi) \phi(B_{\mathrm{diag}}(\alpha)) R(\pi)$ is a rotation by angle
$\alpha$ about some axis whose angle of separation from the axis that
$\phi(B_{\mathrm{diag}}(\alpha))$ rotates about is $2 (2
\theta_{12}-\pi) \simeq 0.9532$. Similarly, by geometric
visualization, $\phi(B_{\mathrm{diag}}(\alpha')) Q
\phi(B_{\mathrm{diag}}(-\alpha'))$ is a rotation by $\alpha$
about an axis whose angle of separation from the axis that $Q$ rotates
about is anywhere from $0$ to $2 \times 0.9532$ depending on the value
of $\alpha'$. Since $2 \times 0.9532 > \pi/2$, there exists 
some choice of $\alpha'$ such that this angle of separation is
$\pi/2$. Thus, using the rotations we have constructed we can perform
Euler rotations to obtain an arbitrary rotation. 
\end{proof}

As a special case of lemma \ref{twodim}, we can obtain, up to global
phase,
\[
M_{\mathrm{swap}} = \left[ \begin{array}{ccccc}
\phi_1 &        &        &   &   \\
       & \phi_2 &        &   &   \\
       &        & \phi_2 &   &   \\
       &        &        & 0 & 1 \\
       &        &        & 1 & 0
\end{array} \right]
\begin{array}{c}
\textrm{$*$ \ldots $*$p$*$} \\
\textrm{$*$ \ldots pp$*$} \\
\textrm{$*$ \ldots $*$pp} \\
\textrm{$*$ \ldots ppp} \\
\textrm{$*$ \ldots p$*$p}
\end{array}.
\]
Similarly, we can produce $M_{\mathrm{swap}}^{-1}$. Using
$M_{\mathrm{swap}}$, $M_{\mathrm{swap}}^{-1}$, and equation
\ref{generic} we can produce the matrix
\[
M_C = \begin{array}{l} \includegraphics[width=2in]{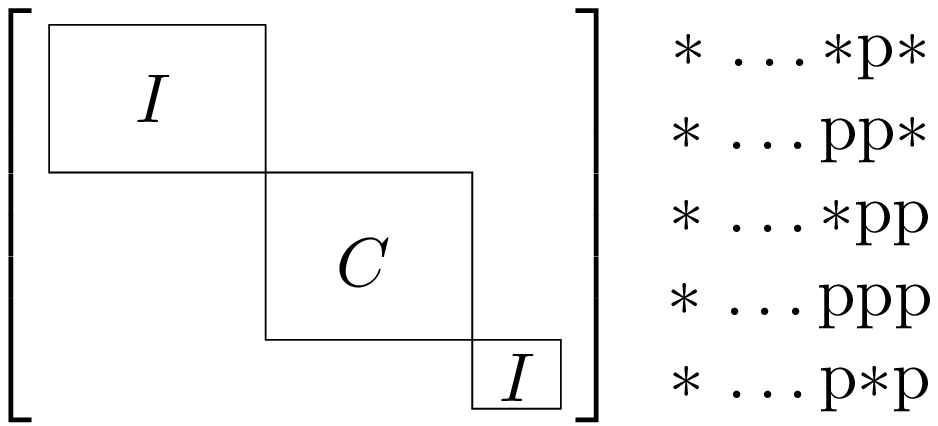} \end{array}
\]
for any unitary $C$. We do it as follows. Since $C$ is a normal
operator, it can be unitarily diagonalized. That is, there exists some
unitary $U$ such that $U C U^{-1} = D$ for some diagonal unitary
$D$. Next, note that in equation \ref{generic} the dimension of $B$ is
more than half that of $A$. Let $d = \mathrm{dim}(A) -
\mathrm{dim}(B)$, and let $I_d$ be the identity operator of dimension
$d$. We can easily construct two diagonal unitaries $D_1$ and $D_2$ of
dimension $\mathrm{dim}(B)$ such that  $(D_1 \oplus I_d)(I_d \oplus
D_2) = D$. As special cases of equation \ref{generic} we can obtain
\[
M_{D_1} = \left[ \begin{array}{ccccc}
1 &   &   &   &   \\
  & 1 &   &   &   \\
  &   & 1 &   &   \\
  &   &   & 1 &  \\
  &   &   &   & D_1
\end{array} \right]
\begin{array}{c}
\textrm{$*$ \ldots $*$p$*$} \\
\textrm{$*$ \ldots pp$*$} \\
\textrm{$*$ \ldots $*$pp} \\
\textrm{$*$ \ldots ppp} \\
\textrm{$*$ \ldots p$*$p}
\end{array}
\]
and
\[
M_{D_2} = \left[ \begin{array}{ccccc}
1 &   &   &   &   \\
  & 1 &   &   &   \\
  &   & 1 &   &   \\
  &   &   & 1 &  \\
  &   &   &   & D_2
\end{array} \right]
\begin{array}{c}
\textrm{$*$ \ldots $*$p$*$} \\
\textrm{$*$ \ldots pp$*$} \\
\textrm{$*$ \ldots $*$pp} \\
\textrm{$*$ \ldots ppp} \\
\textrm{$*$ \ldots p$*$p}
\end{array}
\]
and
\[
M_P = \begin{array}{l} \includegraphics[width=2in]{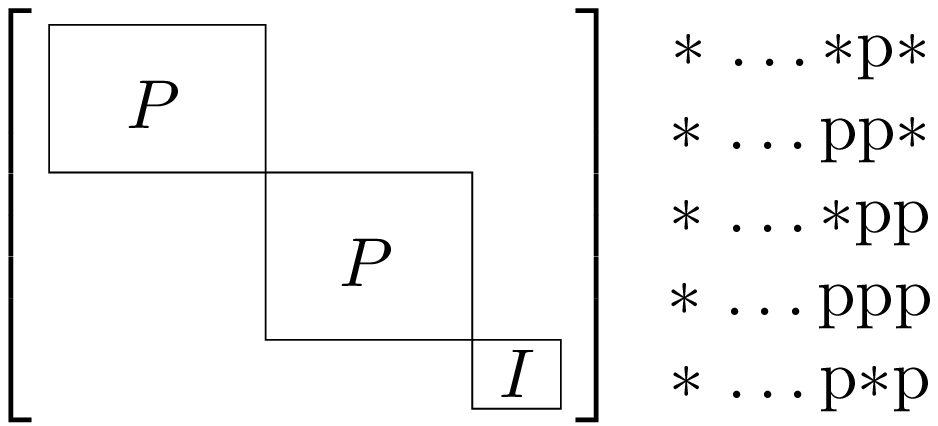} \end{array}
\]
where $P$ is a permutation matrix that shifts the lowest $\dim(B)$
basis states from the bottom of the block to the top of the block.
Thus we obtain
\[
M_2 \equiv M_{\mathrm{swap}} M_{D_2} M_{\mathrm{swap}}^{-1} =
\begin{array}{l}
\includegraphics[width=2in]{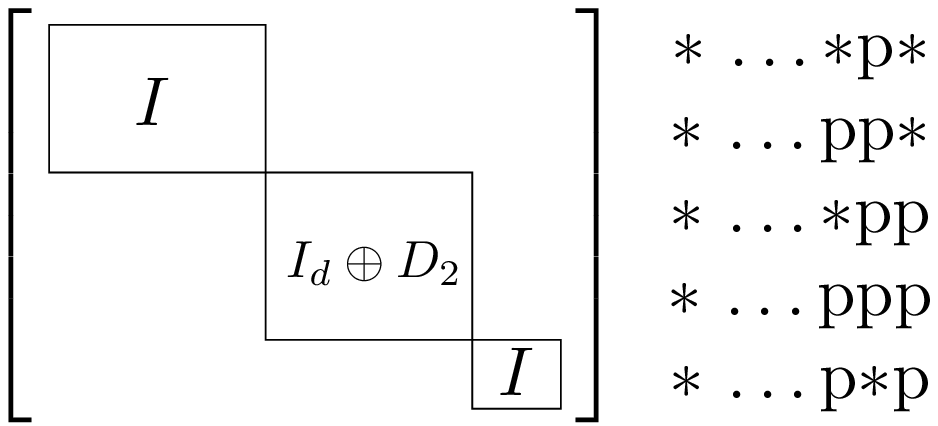}
\end{array}
\]
and
\[
M_1 \equiv M_P M_{\mathrm{swap}} M_{D_1}
M_{\mathrm{swap}}^{-1} M_P^{-1} =
\begin{array}{l}
\includegraphics[width=2in]{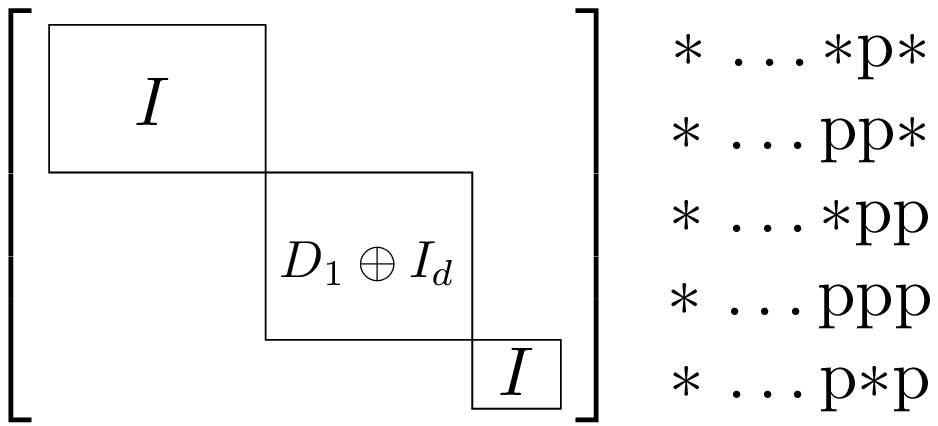}
\end{array}.
\]
Thus
\[
M_1 M_2 = \begin{array}{l}
 \includegraphics[width=2in]{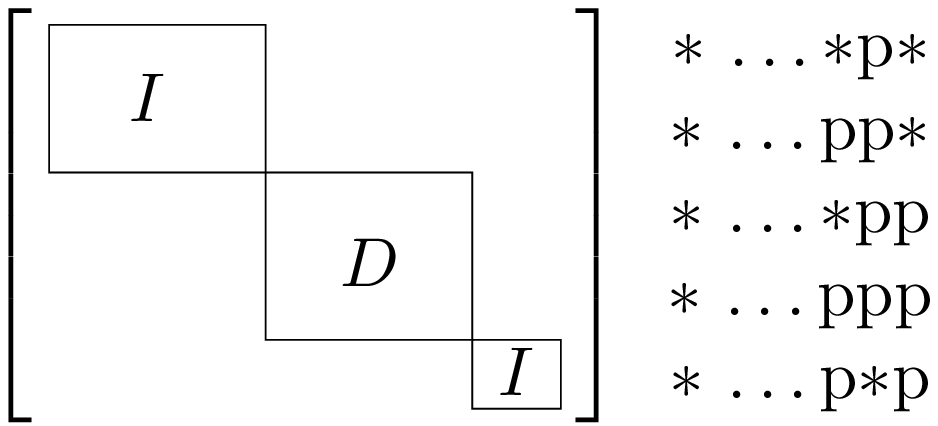}
 \end{array}.
\]
As a special case of equation \ref{generic} we can obtain
\[
M_U = \begin{array}{l}
 \includegraphics[width=2in]{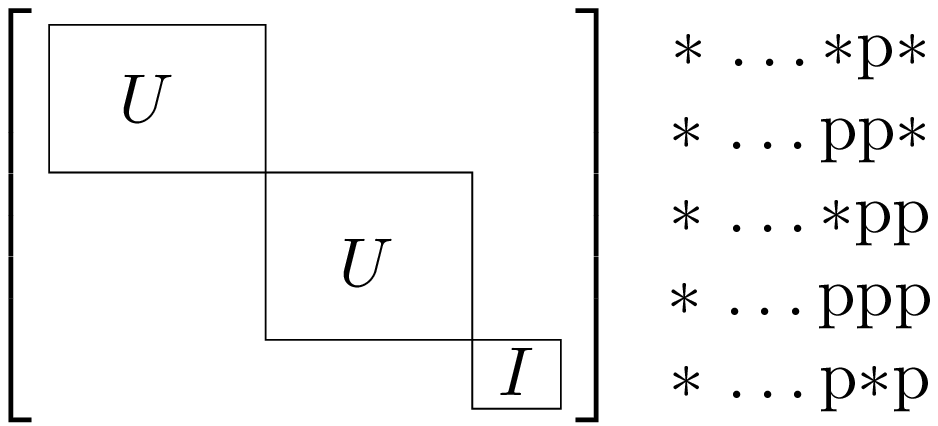}
 \end{array}.
\]
Thus we obtain $M_C$ by the construction $M_C = M_U M_1 M_2
M_U^{-1}$. By multiplying together $M_C$ and $M_{n-1}$ we can control
the three blocks independently. For arbitrary unitaries $A,B,C$ of
appropriate dimension we can obtain
\begin{equation}
\label{indep}
M_{ACB} = \begin{array}{l}
 \includegraphics[width=2in]{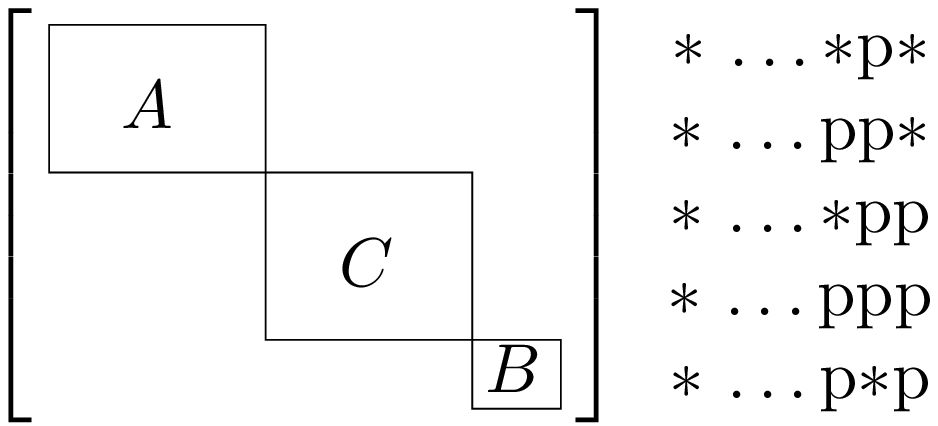}
 \end{array}.
\end{equation}
As a special case of equation \ref{indep} we can obtain
\[
M_{\mathrm{unphase}} =
\left[ \begin{array}{ccccc}
\phi_1^* &          &          &   &   \\
         & \phi_2^* &          &   &   \\
         &          & \phi_2^* &   &   \\
         &          &          & 1 &   \\
         &          &          &   & 1
\end{array} \right]
\begin{array}{c}
\textrm{$*$ \ldots $*$p$*$} \\
\textrm{$*$ \ldots pp$*$} \\
\textrm{$*$ \ldots $*$pp} \\
\textrm{$*$ \ldots ppp} \\
\textrm{$*$ \ldots p$*$p}
\end{array}.
\]
Thus, we obtain a clean swap
\begin{equation}
\label{clean}
M_{\mathrm{clean}} = M_{\mathrm{unphase}} M_{\mathrm{swap}} =
\left[ \begin{array}{ccccc}
1 &   &   &   &   \\
  & 1 &   &   &   \\
  &   & 1 &   &   \\
  &   &   & 0 & 1 \\
  &   &   & 1 & 0
\end{array} \right]
\begin{array}{c}
\textrm{$*$ \ldots $*$p$*$} \\
\textrm{$*$ \ldots pp$*$} \\
\textrm{$*$ \ldots $*$pp} \\
\textrm{$*$ \ldots ppp} \\
\textrm{$*$ \ldots p$*$p}
\end{array}.
\end{equation}
We'll now use $M_{\mathrm{clean}}$ and $M_{ACB}$ as our building
blocks to create the maximally general unitary
\begin{equation}
\label{maxgen}
M_{\mathrm{gen}}(V,W) = \begin{array}{l}
 \includegraphics[width=2in]{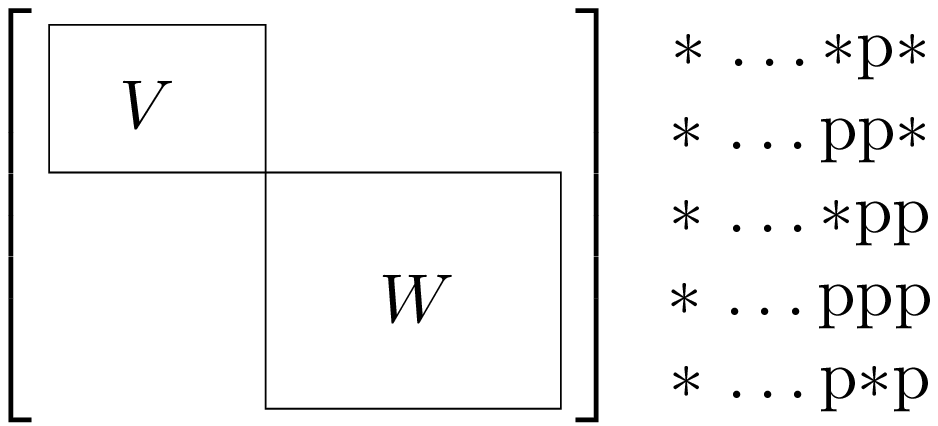}
 \end{array}.
\end{equation}

For $n+1$ symbols, the $* \ldots *pp$
subspace has dimension $f_{n-3}$, and the $* \ldots  p*p$ and $*
\ldots ppp$ subspaces each have dimension $f_{n-2}$. Thus, in equation
\ref{indep}, the block $C$ has dimension $f_{n-2} + f_{n-3} =
f_{n-1}$, and the block $B$ has dimension $f_{n-2}$. To construct
$M_{\mathrm{gen}}(V,W)$ we will choose a subset of the $f_n$ basis
states acted upon by the $B$ and $C$ blocks and permute them
into the $C$ block. Then using $M_{ACB}$, we'll perform an arbitrary unitary
on these basis states. At each such step we can act upon a subspace
whose dimension is a constant fraction of the dimension of the entire
$f_n$ dimensional space on which we wish to apply an arbitrary
unitary. Furthermore, this constant fraction is more than
half. Specifically, $f_n/f_{n-1} \simeq 1/\phi \simeq 0.62$ for large
$n$. We'll show that an arbitrary unitary can be built up as a product
of a constant number of unitaries each of which act only on half the
basis states. Thus our ability to act on approximately $62\%$ of the
basis states at each step is more than sufficient. 

Before proving this, we'll show how to permute an arbitrary set of
basis states into the $C$ block of $M_{ACB}$. Just use
$M_{\mathrm{clean}}$ to  swap the $B$ block into the $* \ldots ppp$
subspace of the $C$ block. Then, as a special case of equation
\ref{indep}, choose $A$ and $B$ to be the identity, and $C$ to be a
permutation which swaps some states between the $* \ldots *pp$ and $*
\ldots ppp$ subspaces of the $C$ block. The states which we swap up
from the $* \ldots ppp$ subspace are the ones from $B$ which we wish
to move into $C$. The ones which we swap down from the $* \ldots *pp$
subspace are the ones from $C$ which we wish to move into $B$. This
process allows us to swap a maximum of $f_{n-3}$ states between the
$B$ block and the $C$ block. Since $f_{n-3}$ is more than half the
dimension of the $B$ block, it follows that any desired permutation of
states between the $B$ and $C$ blocks can be achieved using two
repetitions of this process.

We'll now show the following.
\begin{lemma}
Let $m$ by divisible by 4. Any $m \times m$ unitary can be obtained as
a product of seven unitaries, each of which act only on the space
spanned by $m/2$ of the basis states, and leave the rest of the basis
states undisturbed.
\end{lemma}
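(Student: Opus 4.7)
The plan is to invoke the cosine--sine (CS) decomposition of a unitary in block form. Partition the $m$ basis states into $A = \{e_1,\ldots,e_{m/2}\}$ and $B = \{e_{m/2+1},\ldots,e_m\}$, and regard a generic $U \in U(m)$ as a $2 \times 2$ matrix of $(m/2)\times(m/2)$ blocks. The CS decomposition then yields
\[
U \;=\; \begin{pmatrix} L_1 & 0 \\ 0 & L_2 \end{pmatrix}
\begin{pmatrix} C & -S \\ S & C \end{pmatrix}
\begin{pmatrix} R_1 & 0 \\ 0 & R_2 \end{pmatrix},
\]
where $L_1,L_2,R_1,R_2 \in U(m/2)$ and $C, S$ are real nonnegative diagonal matrices with $C^2 + S^2 = I$.

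Now I count factors. The left block-diagonal unitary equals the product $(L_1 \oplus I)(I \oplus L_2)$ of two unitaries, the first supported on $A$ and the second on $B$; the right block-diagonal unitary splits in exactly the same way. That contributes four of the seven factors allowed by the statement.

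The middle factor $M$ is a direct sum of $m/2$ commuting Givens rotations, one on each pair $\{e_i, e_{m/2+i}\}$. Here I would use the hypothesis $4 \mid m$ to split the $m/2$ pairs into two equal groups of size $m/4$. The rotations in the first group touch only the $m/2$ basis states $\{e_1,\ldots,e_{m/4}\} \cup \{e_{m/2+1},\ldots,e_{3m/4}\}$, and those in the second group touch only the complementary $m/2$ basis states. Since the two groups act on disjoint pairs they commute, so $M$ factors as a product of two unitaries of the required form. Combined with the four outer factors this yields six factors; one pads with an identity to reach the claimed bound of seven.

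There is essentially no serious obstacle in this plan. The CS decomposition is classical and constructive --- obtainable via simultaneous singular value decompositions of the diagonal blocks of $U$ --- and it is straightforward to verify that each of the factors produced fixes every basis state outside of its designated $m/2$-element subset. The only combinatorial wrinkle is the partition of the Givens rotations, and this is exactly what the divisibility hypothesis $4 \mid m$ was included to permit.
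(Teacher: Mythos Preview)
Your argument is correct and in fact yields a sharper bound (six factors, with the seventh being an idle identity). The paper takes a quite different route: it runs a Gaussian-elimination-style reduction of $U$ to the identity using two-level row operations of the form \eqref{row_op}, carefully batching those operations so that each batch touches only $m/2$ rows. Working in $4\times 4$ blocks of size $m/4$, the paper triangularizes and then clears columns block by block, counting seven batches in total. Your CS-decomposition approach is cleaner and more structural: the block-diagonal outer factors split for free, and the middle factor is already a commuting product of $m/2$ Givens rotations on disjoint pairs, which the hypothesis $4\mid m$ lets you partition evenly into two groups of $m/4$ pairs supported on complementary $m/2$-element subsets. What the paper's approach buys is self-containment---it needs nothing beyond elementary row reduction---and it makes transparent the remark following the lemma that the precise fraction $1/2$ and the divisibility condition are inessential, since one can always run the elimination with whatever block sizes are at hand. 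Your approach buys brevity and a better constant, at the cost of invoking the CS decomposition as a black box.
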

It will be obvious from the proof that even if the dimension of the
matrix is not divisible by four, and the fraction of the basis states on
which the individual unitaries act is not exactly $1/2$ it will still
be possible to obtain an arbitrary unitary using a constant number of
steps independent of $m$. Therefore, we will not explicitly work out this
straightforward generalization.
\\
\begin{proof}
In \cite{Nielsen} it is shown that for any unitary $U$, one can always
find a series of unitaries $L_n, \ldots, L_1$ which each act on only
two basis states such that $L_n \ldots L_1 U$ is the identity. Thus
$L_n \ldots L_1 = U^{-1}$. It follows that any unitary can be
obtained as a product of such two level unitaries. The individual
matrices $L_1, \ldots, L_n$ each perform a (unitary) row
operation on $U$. The sequence $L_n \ldots L_1$ reduces $U$ to
the identity by a method very similar to Gaussian elimination. We will
use a very similar construction to prove the present lemma. The
essential difference is that we must perform the two level unitaries
in groups. That is, we choose some set of $m/2$ basis states, perform
a series of two level unitaries on them, then choose another set of
$m/2$ basis states, perform a series of two level unitaries on them,
and so on. After a finite number of such steps (it turns out that
seven will suffice) we will reduce $U$ to the identity.

Our two-level unitaries will all be of the same type. We'll fix our
attention on two entries in $U$ taken from a particular column:
$U_{ik}$ and $U_{jk}$. We wish to perform a unitary row operation,
\emph{i.e.} left multiply by a two level unitary, to set
$U_{jk}=0$. If $U_{ik}$ and $U_{jk}$ are not both zero, then the
two-level unitary which acts on the rows $i$ and $j$ according to
\begin{equation}
\label{row_op}
\frac{1}{\sqrt{|U_{ik}|^2 + |U_{jk}|^2}} 
\left[ \begin{array}{cc}
U_{ik}^* & U_{jk}^* \\
U_{jk} & - U_{ik}
\end{array} \right]
\end{equation}
will achieve this. If $U_{ik}$ and $U_{jk}$ are both zero there is
nothing to be done.

We can now use this two level operation within groups of basis states
to eliminate matrix elements of $U$ one by one. As in Gaussian
elimination, the key is that once you've obtained some zero matrix
elements, your subsequent row operations must be chosen so that they
do not make these nonzero again, undoing your previous work.

As the first step, we'll act on the top $m/2$ rows in order to reduce
the upper-left quadrant of $U$ to upper triangular form. We can do
this as follows. Consider the first and second entries in the first
column. Using the operation \ref{row_op} we can make the second entry
zero. Next consider the first and third entries in the first
column. By operation \ref{row_op} we can similarly make the third
entry zero. Repeating this procedure, we get all of the entries in the
top half of the first column to be zero other than the top
entry. Next, we perform the same procedure on the second column except
leaving out the top row. These row operations will not alter the first
column since the rows being acted upon all have zero in the first
column. We can then repeat this procedure for each column in the left
half of $U$ until the upper-left block is upper triangular. 

We'll now think of U in terms of 16 blocks of size
$(m/4) \times (m/4)$. In the second step we'll eliminate the matrix
elements in the third block of the first column. The second step is
shown schematically as
\[
\begin{array}{l} \includegraphics[width=2.0in]{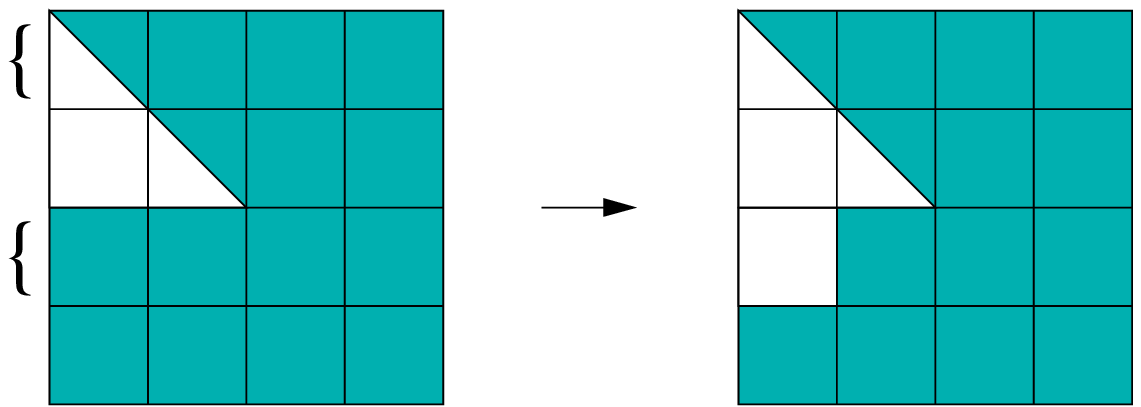} \end{array}
\]
The curly braces indicate the rows to be acted upon, and the unshaded
areas represent zero matrix elements. This step can be performed very
similarly to the first step. The nonzero matrix elements in the bottom
part of the first column can be eliminated one by one by interacting
with the first row. The nonzero matrix elements in the bottom part of
the second column can then be eliminated one by one by interacting
with the second row. The first column will be undisturbed by this
because the rows being acted upon in this step have zero matrix
elements in the first column. Similarly acting on the remaining
columns yields the desired result.

The next step, as shown below, is nearly identical and can be done the
same way.
\begin{equation}
\label{step3_eq}
\begin{array}{l} \includegraphics[width=2.0in]{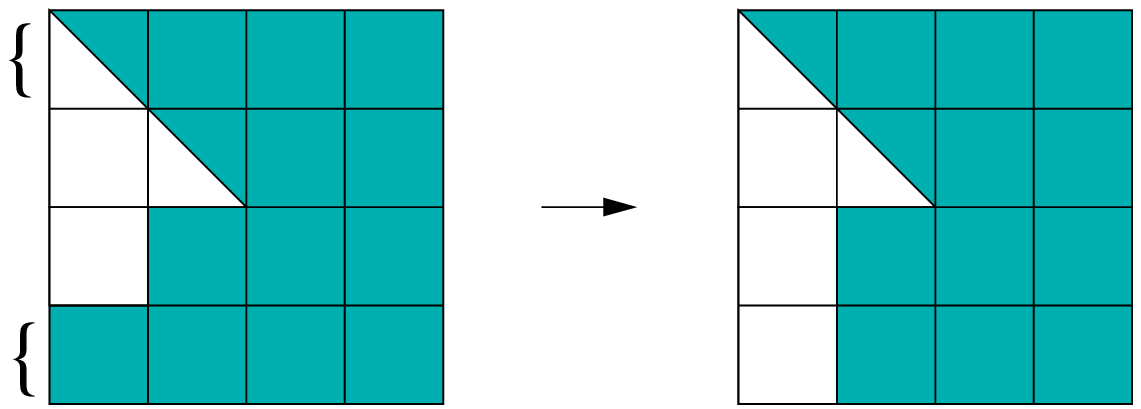} \end{array}
\end{equation}
The matrix on the right hand side of \ref{step3_eq} is unitary. It
follows that it must be of the form
\[
\begin{array}{l} \includegraphics[width=0.7in]{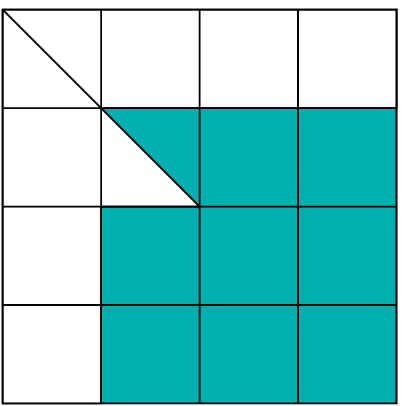} \end{array}
\]
where the upper-leftmost block is a diagonal unitary. We can next
apply the same sorts of steps to the lower $3 \times 3$ blocks, as
illustrated below.
\[
\begin{array}{l} \includegraphics[width=2.9in]{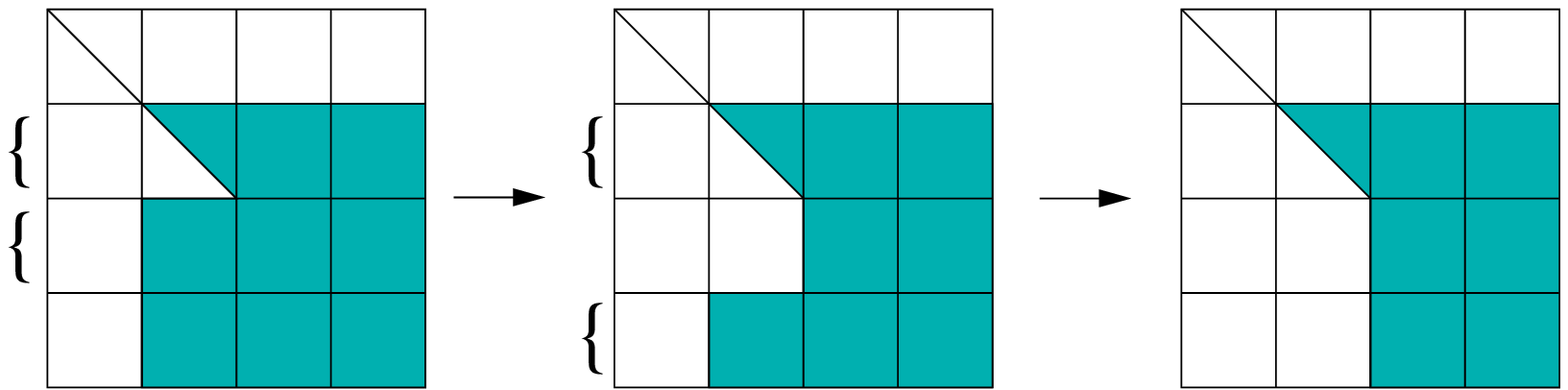} \end{array}
\]
By unitarity the resulting matrix is actually of the form
\[
\begin{array}{l} \includegraphics[width=0.7in]{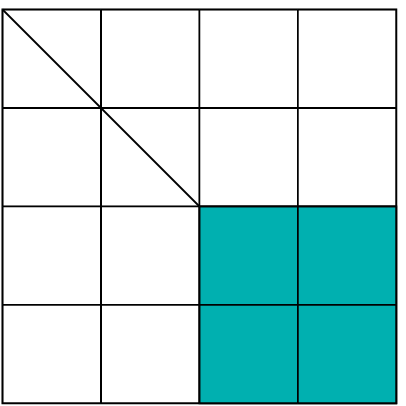} \end{array}
\]
where the lower-right quadrant is an $(m/2) \times (m/2)$ unitary matrix,
and the upper-left quadrant is an $(m/2) \times (m/2)$ diagonal unitary
matrix. We can now apply the inverse of the upper-left quadrant to the
top $m/2$ rows and then apply the inverse of the lower-right quadrant
to the bottom $m/2$ rows. This results in the identity matrix, and we
are done. In total we have used seven steps.
\end{proof}

Examining the preceding construction, we can see the recursive step
uses a constant number of the $M_{n-1}$ operators from the next lower
level of recursion, plus a constant number of $M_n$ operators. Thus,
the number of crossings in the braid grows only exponentially in the
recursion depth. Since each recursion adds one more symbol, we see
that to construct $M_{\mathrm{gen}}(V,W)$ on logarithmically many
symbols requires only polynomially many crossings in the corresponding
braid.

The main remaining task is to work out the base case on which the
recursion rests. Since the base case is for a fixed set of generators
on a fixed number of symbols, we can simply use the Solovay-Kitaev
theorem\cite{Kitaev}.
\begin{theorem}[Solovay-Kitaev]
\label{Solovay-Kitaev}
Suppose matrices $U_1, \ldots, U_r$ generate a dense subgroup in
$SU(d)$. Then, given a desired unitary $U \in SU(d)$, and a precision
parameter $\delta > 0$, there is an algorithm to find a product $V$ of
$U_1, \ldots, U_r$ and their inverses such that $\| V - U \| \leq
\delta$. The length of the product and the runtime of the algorithm
are both polynomial in $\log(1/\delta)$.
\end{theorem}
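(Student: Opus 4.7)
The plan is to prove Solovay--Kitaev by an iterated commutator construction that refines an initial coarse net into arbitrarily fine approximations. First, I would construct a base net $\Gamma_0$ consisting of all words of length at most some constant $\ell_0$ in $U_1,\ldots,U_r$ and their inverses. Because the group they generate is dense in $SU(d)$, choosing $\ell_0$ large enough (depending only on $d$ and the generators, not on $\delta$) makes $\Gamma_0$ an $\varepsilon_0$-net in $SU(d)$ for some fixed small $\varepsilon_0$; the precomputation and lookup have cost independent of $\delta$. Given any target $U \in SU(d)$, a search in $\Gamma_0$ yields an initial approximant $V_0$ with $\|U V_0^{-1} - I\| \leq \varepsilon_0$.

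The core of the argument is a recursion $V_{n+1} = \Phi(U,V_n)$ that reduces the error from $\varepsilon_n$ to order $\varepsilon_n^{3/2}$ while multiplying the word length by only a constant factor. Given $V_n$ with $W \equiv U V_n^{-1}$ satisfying $\|W - I\| \leq \varepsilon_n$, I would invoke a shrinking-commutator lemma: every $W \in SU(d)$ sufficiently close to $I$ can be written as a group commutator $W = A B A^{-1} B^{-1}$ with $\|A - I\|, \|B - I\| = O(\sqrt{\varepsilon_n})$, with such $A,B$ extractable from $W$ in time independent of $\varepsilon_n$. Calling the recursion at level $n$ to obtain words $\tilde A, \tilde B$ with $\|\tilde A - A\|, \|\tilde B - B\| \leq \varepsilon_n$, I set $V_{n+1} = \tilde A \tilde B \tilde A^{-1} \tilde B^{-1} V_n$. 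Writing $A = I + a$, $B = I + b$, $\tilde A = I + \tilde a$, $\tilde B = I + \tilde b$ with $\|a\|,\|b\| = O(\sqrt{\varepsilon_n})$ and $\|\tilde a - a\|,\|\tilde b - b\| \leq \varepsilon_n$, a direct expansion gives $[\tilde A, \tilde B] - [A, B] = [\tilde a - a,\, \tilde b] + [a,\, \tilde b - b] + O(\varepsilon_n^{3/2}) = O(\varepsilon_n^{3/2})$, so $\|U V_{n+1}^{-1} - I\| = O(\varepsilon_n^{3/2}) \equiv \varepsilon_{n+1}$.

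Tracking the recursion: the word length obeys $\ell_{n+1} \leq 5 \ell_n + O(1)$ (four copies used in the commutator plus the tail $V_n$), and the error obeys $\varepsilon_{n+1} \leq C \varepsilon_n^{3/2}$. Provided $\varepsilon_0 < 1/C^2$, induction gives $\varepsilon_n \leq \varepsilon_0^{(3/2)^n}$ up to a constant, so precision $\delta$ is reached at level $n = O(\log \log(1/\delta))$ with word length $\ell_n = O(5^n) = O\bigl(\log^{\log_{3/2} 5}(1/\delta)\bigr)$, which is polynomial in $\log(1/\delta)$. The per-level runtime is dominated by diagonalizing the current $W$ and a constant number of matrix products in the fixed dimension $d$, all polynomial, so the total runtime is likewise polynomial in $\log(1/\delta)$.

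The main obstacle is the shrinking-commutator lemma: one must show that every near-identity $W \in SU(d)$ admits a decomposition $W = [A,B]$ with $\|A - I\|, \|B - I\| = O(\sqrt{\|W - I\|})$, together with an explicit efficient algorithm producing such a pair. The standard route is to diagonalize $W = \exp(iH)$ with $H$ Hermitian of small norm, seek Hermitian $X,Y$ of norm $O(\sqrt{\|H\|})$ satisfying $[X,Y] = H$ to leading order (exploiting that the commutator map $\mathrm{ad}_X$ on $\mathfrak{su}(d)$ is surjective onto the appropriate complement), and then correct the higher-order Baker--Campbell--Hausdorff error by a contraction mapping or Newton iteration. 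Verifying that such $X,Y$ exist for \emph{every} Hermitian $H$ (not only generic ones) and that the correction converges geometrically is the technical heart of the proof; once the lemma is in hand, the remainder is bookkeeping on the $(3/2)^n$ error reduction and the constant word-length blowup per level.
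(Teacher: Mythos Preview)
The paper does not prove this theorem; it is quoted as a known result with a citation to Kitaev and then used as a black box in the proof of Proposition~1. So there is no proof in the paper to compare against.

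Your outline is the standard Solovay--Kitaev argument (essentially the Dawson--Nielsen presentation of Kitaev's original idea): build a coarse $\varepsilon_0$-net, then recursively approximate the residual $W = U V_n^{-1}$ by writing it as a group commutator $[A,B]$ with $A,B$ at distance $O(\sqrt{\varepsilon_n})$ from the identity, approximate $A,B$ at the previous level, and exploit the quadratic cancellation in $[\tilde A,\tilde B] - [A,B]$ to get error $O(\varepsilon_n^{3/2})$. The length recursion $\ell_{n+1} \le 5\ell_n$ and the error recursion $\varepsilon_{n+1} \le C\varepsilon_n^{3/2}$ yield the claimed $\mathrm{polylog}(1/\delta)$ bound. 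This is correct in outline, and you have correctly identified the shrinking-commutator lemma as the nontrivial core.

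Two small points worth tightening. First, density alone gives the existence of $\ell_0$ producing an $\varepsilon_0$-net, but nonconstructively; for the algorithmic claim you should either treat $(\ell_0,\varepsilon_0)$ as fixed constants depending on the instance, or note that the recursion only needs a net in a neighborhood of the identity (which is what the commutator step actually requires after the first call). Second, in your commutator error expansion you should make explicit that the leading terms $[\tilde a - a,\tilde b]$ and $[a,\tilde b - b]$ are bounded by $O(\varepsilon_n \cdot \sqrt{\varepsilon_n})$ precisely because $\|\tilde b\|,\|a\| = O(\sqrt{\varepsilon_n})$; as written the bound is asserted rather than derived. Neither of these is a gap in the approach.
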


Because the total complexity of the process is polynomial, it
is only necessary to implement the base case to polynomially small
$\delta$ in order for the final unitary $M_{\mathrm{gen}}(V,W)$ to
have polynomial precision. This follows from simple error propagation.
An analogous statement about the precision of gates needed in
quantum circuits is worked out in \cite{Nielsen}. This completes the
proof of proposition \ref{efficiency}.

\section{Zeckendorf Representation}
\label{bijective}

Following \cite{Kauffman}, to construct the Fibonacci representation
of the braid group, we use strings of p and $*$ symbols such that no
two $*$ symbols are adjacent. There exists a bijection $z$ between
such strings and the integers, known as the Zeckendorf
representation. Let $P_n$ be the set of all such strings of length
$n$. To construct the map $z:P_n \to \{0,1,\ldots,f_{n+2} \}$ 
we think of $*$ as one and $p$ as zero. Then, for a given 
string $s = s_n s_{n-1} \ldots s_1$ we associate the integer
\begin{equation}
\label{recap}
z(s) = \sum_{i=1}^n s_i f_{i+1},
\end{equation}
where $f_i$ is the $i\th$ Fibonacci number: $f_1 = 1, f_2=1, f_3 = 2$,
and so on. In this appendix we'll show the following.
\begin{proposition}
\label{bij}
For any $n$, the map $z: P_n \to \{0, \ldots, f_{n+2} \}$ defined by
  $z(s) = \sum_{i=1}^n s_i f_{i+1}$ is bijective.
\end{proposition}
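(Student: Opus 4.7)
The plan is a straightforward induction on $n$, splitting $P_n$ according to the value of the most significant symbol $s_n$ and invoking the Fibonacci recurrence $f_{n+2} = f_{n+1} + f_n$. First I would verify the base cases $n=1$ and $n=2$ by hand: $P_1 = \{p,*\}$ maps to $\{0,1\}$ and $P_2 = \{pp, p*, *p\}$ maps to $\{0, 1, 2\}$, matching $f_3 = 2$ and $f_4 = 3$.

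For the inductive step, assume the statement holds for $n-1$ and $n-2$. Given $s = s_n s_{n-1}\ldots s_1 \in P_n$, split into two cases. If $s_n = 0$ (i.e.\ $p$), then $s_{n-1}\ldots s_1$ is an arbitrary element of $P_{n-1}$, and $z(s) = \sum_{i=1}^{n-1} s_i f_{i+1}$ is exactly the Zeckendorf value of that shorter string; by the inductive hypothesis this ranges bijectively over $\{0, 1, \ldots, f_{n+1}-1\}$. If $s_n = 1$ (i.e.\ $*$), then the no-adjacent-$*$ condition forces $s_{n-1} = 0$, and $s_{n-2}\ldots s_1$ is an arbitrary element of $P_{n-2}$. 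Here $z(s) = f_{n+1} + \sum_{i=1}^{n-2} s_i f_{i+1}$, and by induction the sum ranges bijectively over $\{0, \ldots, f_n - 1\}$, so $z(s)$ ranges bijectively over $\{f_{n+1}, \ldots, f_{n+1}+f_n - 1\}$.

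Now invoke the Fibonacci recurrence $f_{n+1} + f_n = f_{n+2}$ to see that the second range is precisely $\{f_{n+1}, \ldots, f_{n+2}-1\}$, which is disjoint from the first range and together with it covers $\{0, 1, \ldots, f_{n+2}-1\}$. Since the two cases are determined by the value of $s_n$ and each is bijective onto its portion of the target, the combined map $z: P_n \to \{0, \ldots, f_{n+2}-1\}$ is a bijection, completing the induction.

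There is no real obstacle here; the only subtle point is noting that the no-adjacent-$*$ constraint on $P_n$ is exactly what makes the recursion split cleanly into the two disjoint cases $s_n = 0$ and $s_n = 1$, and that the Fibonacci recurrence lines up the ranges so that they tile $\{0, \ldots, f_{n+2}-1\}$ without gap or overlap. It is also worth remarking that the counting $|P_n| = f_{n+2}$ (which matches the target cardinality) falls out of the same recursion, so one could equivalently prove injectivity alone and appeal to a pigeonhole argument, but establishing surjectivity directly via the inductive decomposition above is cleaner.
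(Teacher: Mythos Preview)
Your proof is correct and takes essentially the same approach as the paper: both induct on $n$, split on the leading symbol $s_n$, and use the Fibonacci recurrence to show the two resulting ranges tile $\{0,\ldots,f_{n+2}-1\}$. The only cosmetic difference is that the paper packages the two cases as auxiliary statements $A_n$ and $B_n$ (strings starting with $p$ versus $*$) and invokes $A_{n-1}$ for the $*$-case, whereas you strip off both the leading $*$ and the forced $p$ and invoke full bijectivity on $P_{n-2}$ directly; these are equivalent since $A_{n-1}$ is precisely bijectivity on $P_{n-2}$ after dropping a leading $p$ that contributes zero.
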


\begin{proof}
We'll inductively show that the following two statements
are true for every $n \geq 2$. 
\\ \\
$\mathbf{A_n:}\quad$
\emph{$z$ maps strings of length $n$ starting with p bijectively to
  $\{0,\ldots, f_{n+1}-1 \}$.}
\\ \\
$\mathbf{B_n:}\quad$
\emph{$z$ maps strings of length $n$ starting with $*$ bijectively to
  $\{f_{n+1}, \ldots, f_{n+2}-1 \}$.}
\\ \\
Together, $A_n$ and $B_n$ imply that $z$ maps $P_n$ bijectively to
$\{0, \ldots, f_{n+2}-1 \}$. As a base case, we can look at $n=2$. 
\begin{eqnarray*}
pp & \leftrightarrow & 0 \\
p* & \leftrightarrow & 1 \\
*p & \leftrightarrow & 2
\end{eqnarray*}
Thus $A_2$ and $B_2$ are true. Now for the induction. Let $s_{n-1} \in
P_{n-1}$. By equation 
\ref{recap},
\[
z(p s_{n-1}) = z(s_{n-1}).
\]
Since $s_{n-1}$ follows a p symbol, it can be any element of
$P_{n-1}$. By induction, $z$ is bijective on $P_{n-1}$, thus $A_n$ is
true. Similarly, by equation 
\ref{recap}
\[
z(* s_{n-1}) = f_{n+1} + z(s_{n-1}).
\]
Since $s_{n-1}$ here follows a $*$, its allowed values are exactly
those strings which start with p. By induction, $A_{n-1}$ tells us
that $z$ maps these bijectively to $\{0, \ldots, f_n-1 \}$. Since
$f_{n+1} + f_n = f_{n+2}$, this implies $B_n$ is true. Together, $A_n$
and $B_n$ for all $n \geq 2$, along with the trivial $n=1$ case,
imply proposition \ref{bij}.
\end{proof}

\bibliography{jones}% Produces the bibliography via BibTeX.

\begin{thebibliography}{10}

\bibitem{Aharonov2}
Dorit Aharonov and Itai Arad.
\newblock The {BQP}-hardness of approximating the {J}ones polynomial.
\newblock 2006.
\newblock arXiv:quant-ph/0605181.

\bibitem{Aharonov3}
Dorit Aharonov, Itai Arad, Elad Eban, and Zeph Landau.
\newblock Polynomial quantum algorithms for additive approximations of the
  {P}otts model and other points of the {T}utte plane.
\newblock {\em arXiv:quant-ph/0702008}, 2007.

\bibitem{Aharonov1}
Dorit Aharonov, Vaughan Jones, and Zeph Landau.
\newblock A polynomial quantum algorithm for approximating the {J}ones
  polynomial.
\newblock {\em STOC 06}, 2006.
\newblock arXiv:quant-ph/0511096.

\bibitem{Ambainis}
Andris Ambainis, Leonard Schulman, and Umesh Vazirani.
\newblock Computing with highly mixed states.
\newblock {\em Journal of the ACM}, 53(3):507--531, May 2006.
\newblock arXiv:quant-ph/0003136.

\bibitem{Artin}
Michael Artin.
\newblock {\em Algebra}.
\newblock Prentice Hall, 1991.

\bibitem{Barenco}
Adriano Barenco, Artur Ekert, Kalle-Antti Suominen, and P{\"a}ivi
  T{\"o}rm{\"a}.
\newblock Approximate quantum {F}ourier transform and decoherence.
\newblock {\em Phyical Review A}, 54(1):139--146, 1996.
\newblock arXiv:quant-ph/9601018.

\bibitem{Barrington}
David~A. Barrington.
\newblock Bounded-width polynomial-size brancing programs recognize exactly
  those languages in $\mathrm{NC}^1$.
\newblock {\em Journal of Computer and System Sciences}, 38:150--164, 1989.

\bibitem{Coppersmith}
D.~Coppersmith.
\newblock An approximate fourier transform useful in quantum factoring.
\newblock Technical report, IBM, 1994.
\newblock arXiv:quant-ph/0201067.

\bibitem{Cuccaro}
Steven~A. Cuccaro, Thomas~G. Draper, Samuel~A. Kutin, and David~Petrie Moulton.
\newblock A new quantum ripple-carry addition circuit.
\newblock {\em arXiv:quant-ph/0410184}, 2004.

\bibitem{Datta}
Animesh Datta, Steven~T. Flammia, and Carlton~M. Caves.
\newblock Entanglement and the power of one qubit.
\newblock {\em Physical Review A}, 72(042316), 2005.
\newblock arXiv:quant-ph/0505213.

\bibitem{Draper}
Thomas Draper.
\newblock Addition on a quantum computer.
\newblock {\em arXiv:quant-ph/0008033}, 2000.

\bibitem{Freedman2}
Michael Freedman, Alexei Kitaev, and Zhenghan Wang.
\newblock Simulation of topological field theories by quantum computers.
\newblock {\em Communications in Mathematical Physics}, 227:587--603, 2002.

\bibitem{Freedman}
Michael Freedman, Michael Larsen, and Zhenghan Wang.
\newblock A modular functor which is universal for quantum computation.
\newblock 2000.
\newblock arXiv:quant-ph/0001108.

\bibitem{Haken}
W.~Haken.
\newblock Theorie der normalflachen, ein isotopiekriterium f{\"u}r den
  kreisknoten.
\newblock {\em Acta Mathematica}, (105):245--375, 1961.

\bibitem{Hass}
Joel Hass, Jeffrey Lagarias, and Nicholas Pippenger.
\newblock The computational complexity of knot and link problems.
\newblock {\em Journal of the ACM}, 46(2):185--211, 1999.
\newblock arXiv:math.GT/9807016.

\bibitem{Jaeger}
F.~Jaeger, D.~L. Vertigan, and D.~J.~A. Welsh.
\newblock On the computational complexity of the {J}ones and {T}utte
  polynomials.
\newblock {\em Mathematical Proceedings of the Cambridge Philosophical
  Society}, (108):35--53, 1990.

\bibitem{Jones2}
Vaughan F.~R. Jones.
\newblock Braid groups, {H}ecke algebras and type $\mathrm{II}_1$ factors.
\newblock In {\em Geometric methods in operator algebras, {US}-{J}apan
  Seminar}, pages 242--273, Kyoto, July 1983.

\bibitem{Jones}
Vaughan F.~R. Jones.
\newblock A polynomial invariant for knots via von {N}eumann algebras.
\newblock {\em Bulletin of the American Mathematical Society}, 12:103--111,
  1985.

\bibitem{Kauffman}
Louis~H. Kauffman and Samuel J.~Lomonaco Jr.
\newblock q-deformed spin networks knot polynomials and anyonic topological
  quantum computation.
\newblock {\em arXiv:quant-ph/0606114}, 2006.

\bibitem{Kitaev}
Alexei~Yu. Kitaev.
\newblock Quantum computations: algorithms and error correction.
\newblock {\em Russian Mathematical Surveys}, 52(6):1191--1249, 1997.

\bibitem{Knill}
E.~Knill and R.~Laflamme.
\newblock Power of one bit of quantum information.
\newblock {\em Physical Review Letters}, 81(25):5672--5675, 1998.
\newblock arXiv:quant-ph/9802037.

\bibitem{qwgt}
E.~Knill and R.~Laflamme.
\newblock Quantum computating and quadratically signed weight enumerators.
\newblock {\em Information Processing Letters}, 79(4):173--179, 2001.
\newblock arXiv:quant-ph/9909094.

\bibitem{Nielsen}
Michael~A. Nielsen and Isaac~L. Chuang.
\newblock {\em Quantum Computation and Quantum Information}.
\newblock Cambridge University Press, 2000.

\bibitem{Papadimitriou}
Christos Papadimitriou.
\newblock {\em Computational Complexity}.
\newblock Addison Wesley, 1994.

\bibitem{decay1}
David Poulin, Robin Blume-Kohout, Raymond Laflamme, and Harold Ollivier.
\newblock Exponential speedup with a single bit of quantum information:
  Measuring the average fidelity decay.
\newblock {\em Physical Review Letters}, 92(17):177906, 2004.
\newblock arXiv:quant-ph/0310038.

\bibitem{decay2}
C.~A. Ryan, J.~Emerson, D.~Poulin, C.~Negrevergne, and R.~Laflamme.
\newblock Characterization of complex quantum dynamics with a scalable {NMR}
  information processor.
\newblock {\em Physical Review Letters}, 95:250502, 2005.
\newblock arXiv:quant-ph/0506085.

\bibitem{Shepherd}
Dan Shepherd.
\newblock Computation with unitaries and one pure qubit.
\newblock 2006.
\newblock arXiv:quant-ph/0608132.

\bibitem{Takahashi}
Yasuhiro Takahashi and Noburu Kunihiro.
\newblock A quantum circuit for {S}hor's factoring algorithm using $2n+2$
  qubits.
\newblock {\em Quantum Information and Computation}, 6(2):184--192, 2006.

\bibitem{Wegener}
Ingo Wegener.
\newblock {\em The Complexity of Boolean Functions}.
\newblock Wiley, 1987.

\bibitem{Witten}
Edward Witten.
\newblock Quantum field theory and the {J}ones polynomial.
\newblock {\em Communications in Mathematical Physics}, 121(3):351--399, 1989.

\bibitem{Wocjan}
Pawel Wocjan and Jon Yard.
\newblock The {J}ones polynomial: quantum algorithms and applications in
  quantum complexity theory.
\newblock {\em arXiv:quant-ph/0603069}, 2006.

\end{thebibliography}

\end{document}